\documentclass[11pt]{article}
\usepackage[T1]{fontenc}
\usepackage{lmodern,mathrsfs,amsmath,amsfonts,amsthm,microtype}
\usepackage[a4paper,margin=27mm]{geometry}
\usepackage[hidelinks]{hyperref}
\numberwithin{equation}{section}
\newtheorem{theorem}{Theorem}[section]
\newtheorem{lemma}[theorem]{Lemma}
\newtheorem{proposition}[theorem]{Proposition}
\newtheorem{corollary}[theorem]{Corollary}
\newtheorem{conjecture}[theorem]{Conjecture}
\theoremstyle{definition}\newtheorem{definition}[theorem]{Definition}
\newcommand{\C}{\mathbb C}
\newcommand{\eps}{\varepsilon}
\newcommand{\Dx}{\partial_x}
\newcommand{\wt}{\operatorname{wt}}
\newcommand{\Can}{\operatorname{Can}}
\newcommand{\dd}{\,d}
\newcommand{\KK}{\overline K}
\title{\Large\bfseries A proof of the Hodge universality conjecture at nonzero dispersion}
\author{Francisco Hernández Iglesias\\{\small kurtos.ai}}
\date{}
\begin{document}
\maketitle
\begin{abstract}
We prove that a scalar tau-symmetric Hamiltonian deformation of the Riemann hierarchy in generalized standard form is uniquely determined by the nonzero coefficient of $u_x^2$ and the coefficients of $u_{xx}^g$, $g\ge2$, in its first Hamiltonian density. This determines the original hierarchy up to normal Miura transformations. Combined with the construction of Hodge-class double ramification hierarchies by Buryak–Rossi, this establishes Hodge universality at nonzero dispersion.
\end{abstract}

\setcounter{tocdepth}{2}
{\footnotesize\tableofcontents}
\clearpage

\section{Introduction}\label{sec:intro}
The Riemann hierarchy is the system of commuting scalar equations
\[
 \frac{\partial u}{\partial t_p}=\frac{u^p}{p!}u_x,
 \qquad p\ge0.
\]
Its first nontrivial equation is $u_{t_1}=uu_x$. Dispersive deformations add terms involving higher $x$-derivatives, with a formal parameter $\eps$ recording their differential order. The classification problem asks which such deformations remain Hamiltonian and tau-symmetric, and when two of them are related by a change of dependent variable of Miura type. This belongs to the perturbative classification of scalar integrable equations studied in \cite{LZ} and, for conservation laws beyond the Hamiltonian setting, in \cite{ALM}.

Dubrovin et al.~\cite{DLYZ} construct a family of tau-symmetric Hamiltonian deformations of the Riemann hierarchy from Hodge integrals, which are intersection numbers involving characteristic classes of the Hodge bundle on moduli spaces of stable curves. They conjecture that every nontrivial tau-symmetric Hamiltonian deformation of the Riemann hierarchy is equivalent to a member of this family under a normal Miura transformation \cite[Conjecture 1.7]{DLYZ}.

Related progress has been made under additional hypotheses. Yang--Zagier prove universality of the Witten--Kontsevich mapping hierarchy for bihamiltonian perturbations of the Riemann hierarchy possessing a tau-structure \cite[Theorem 13]{YZ}; their notion of tau-structure is weaker than the tau-symmetry used here \cite[Remark 12]{YZ}. Buryak--Xu--Yang classify tau-symmetric Hamiltonian deformations at nonzero dispersion that admit a compatible second Poisson structure, showing that they are Miura-equivalent to the special-Hodge hierarchy \cite[Theorem 4.2]{BXY}. For the broader class of deformations possessing a tau-structure, without assuming a polynomial Hamiltonian structure, Liu et al. construct further families and formulate classification and universality conjectures \cite[Conjectures 5.6--5.7]{LRYZ}.

Buryak--Rossi \cite{BR} approach this classification problem using the geometry of double ramification (DR) cycles \cite{Bur, BDGR, BDGRtype}. After matching the dispersion normalization, the Hodge-class DR hierarchies are normally Miura-equivalent to the Hodge hierarchies of \cite{DLYZ}, by the strong DR/DZ equivalence theorem~\cite[Theorem 1.2]{BLS}, applied to rank-one Hodge cohomological field theories. Conjecture~3.7 in~\cite{BR} restates the standard-form conjecture \cite[Conjecture 6.1]{DLYZ} in terms of the first Hamiltonian. We recall its three assertions to specify the part addressed here. The terms \emph{normal Miura transformation} and \emph{standard form} are defined precisely in Section~\ref{sec:notation}.

\begin{conjecture}[{\cite[Conjecture 3.7]{BR}}]\label{conj:DLYZ}
Every scalar tau-symmetric Hamiltonian deformation of the Riemann hierarchy satisfies the following assertions.
\begin{enumerate}
 \item A unique normal Miura transformation puts its Poisson operator and first Hamiltonian in standard form:
 \[
 K=\partial_x,\qquad
 \overline H_1=\int\left(\frac{u^3}{6}+a\eps^2u_1^2
       +\sum_{g\ge2}\eps^{2g}H_{1,2g}\right)\dd x.
 \]
 Here $u_j=\partial_x^j u$, and each $H_{1,2g}$ is a constant-coefficient polynomial of differential weight $2g$, with every derivative index at least two and the largest index in each monomial occurring at least twice.
 \item If $a=0$, then every $H_{1,2g}$ vanishes.
 \item If $a\ne0$, all coefficients are determined by $a$ and the coefficients $A_g$ of $u_2^g$ in $H_{1,2g}$, for $g\ge2$.
\end{enumerate}
\end{conjecture}

Buryak--Rossi prove the existence and uniqueness of a \emph{generalized} standard form, which may contain odd powers of $\eps$ \cite[Theorem 3.8]{BR}. They also construct Hodge-class DR hierarchies in standard form with arbitrary $a\ne0$ and arbitrary diagonal data $A_g$ \cite[Theorem 3.9]{BR}. The remaining classification question on this locus is uniqueness from these data. Once uniqueness is established in generalized standard form, comparison with their geometric family also forces all odd terms to vanish.

This paper proves the uniqueness theorem below. This proves part~(3) of Conjecture~\ref{conj:DLYZ} and gives part~(1) on the locus $a\ne0$ as a consequence. Hodge universality and evenness follow by combining it with the results of \cite{BR} (Corollary~\ref{cor:universality}).

\begin{theorem}\label{thm:generic}
Two tau-symmetric Hamiltonian deformations of the Riemann hierarchy in generalized standard form with the same $a\ne0$ and the same diagonal coefficients $A_g$, $g\ge2$, have identical Hamiltonians.
\end{theorem}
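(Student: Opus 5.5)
The proof will go by induction on the power of $\eps$. Let $\overline H_1$ and $\overline H_1'$ be the first Hamiltonians of two deformations in generalized standard form with the same $a\ne0$ and the same $A_g$. Suppose their densities agree modulo $\eps^{k}$. The difference of the $\eps^k$ terms is then a constant-coefficient polynomial $\delta$ of differential weight $k$ in the standard-form space $V_k$: every derivative index is at least two, and the top index in each monomial is repeated. When $k=2g$ its $u_2^g$ coefficient vanishes. I want to show $\delta=0$.

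First I will use tau-symmetry and $K=\Dx$ to reconstruct the hierarchy recursively from $\overline H_1$. The densities $h_p$ are fixed by $\partial_x h_{p}$ being determined through the tau-symmetric two-point functions, and all $\overline H_p$ must Poisson-commute with $\overline H_1$. Truncating mod $\eps^k$, the two hierarchies coincide. I then take the first order in $\eps$ at which the difference $\delta$ enters the commutation relation $\{\overline H_1,\overline H_2\}=0$ nontrivially. The dispersionless parts commute automatically, so the first nontrivial contribution appears at order $\eps^{k+2}$. It is linear in $\delta$ (and in the induced correction of $\overline H_2$), coupled through the single term $a\eps^2u_1^2$. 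Because the lower orders coincide, the remaining terms at that order are common to both hierarchies and cancel. The outcome is a homogeneous linear equation $L_a(\delta)=0$, where $L_a=a\,L$ and $L:V_k\to W_{k+3}$ is an explicit linear map into local functionals of weight $k+3$. Concretely, $L(\delta)$ is the $\eps^{k+2}$ part of $\{\int a u_1^2,\ \overline{\delta H_2}\}+\{\int\delta,\ \int a\,(\text{$u_1^2$-part of }h_2)\}$, with $\overline{\delta H_2}$ solved from the dispersionless tau-symmetry relation. Since $a\ne0$, it suffices to show $\ker L=\C\,u_2^{k/2}$ for $k$ even and $\ker L=0$ for $k$ odd.

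For the kernel computation I will order the monomials of $V_k$ by partitions of the weight, by lexicographic comparison of the multiset of derivative indices, and compute the image of each monomial in a basis of $W_{k+3}$ modulo total derivatives. A convenient basis is the normal form in which the top derivative index occurs at least twice. I expect $L$ to be triangular with respect to this order. A monomial $u_{i_1}\cdots u_{i_r}$ should produce a leading term obtained by raising one index, with coefficient a nonzero combinatorial factor, something like $\prod$ of differences $(i_j-2)$ or a binomial expression. That factor degenerates exactly when all indices equal $2$, which would explain why only the diagonal $u_2^g$ survives. For odd $k$, no monomial with all indices $2$ has odd weight, so injectivity gives $\delta=0$ directly, consistent with the evenness consequence. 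Once $\delta=0$, the induction proceeds, and the $\eps^2$ base case is covered by the shared $a$.

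The main obstacle is the triangularity and nondegeneracy in the previous step. The images must be compared modulo $\operatorname{Im}\Dx$, and the reduction to normal form mixes monomials through integration by parts. I will first try to identify the leading term using the variational derivative: $\delta/\delta u$ is injective on $V_k$ modulo $\Dx$, and its top-order symbol is easy to read off from the repeated top index. If that fails to be triangular, the fallback is to test $L(\delta)$ against the generating functions $\int e^{\lambda_1 x}\cdots$, i.e.\ a Fourier/symbol representation of local functionals as symmetric polynomials modulo $(\lambda_1+\dots+\lambda_n)$, where $L$ becomes multiplication by an explicit symmetric polynomial. Its zero set on the relevant symmetric polynomials can then be checked degree by degree.
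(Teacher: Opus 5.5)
\textbf{Gap: the $\eps$-order induction never gives a closed equation for $\delta$.} Already at order $\eps^k$, $\delta$ enters through the dispersionless brackets $\{\int u^3/6\dd x,\overline{u\delta+G}\}+\{\overline\delta,\int u^4/24\dd x\}$. Specialness fixes $\delta H_2$ only up to a derivative-only part $G$, so this is a genuine constraint, not an automatic identity.

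More seriously, at order $\eps^{k+2}$ the Airy coupling of $\delta$ is accompanied by the same dispersionless brackets applied to the next differences of both Hamiltonians, at orders $\eps^{k+1}$ and $\eps^{k+2}$. Nothing says these vanish. So the remaining terms at that order are \emph{not} common to the two hierarchies. You only learn that your $L(\delta)$ lies in the image of the Riemann linearization. Then at order $\eps^{k+4}$ the $\eps^4u_2^2$ term of the second Hamiltonian couples $\delta$ again, and so on, so the system never closes.

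What decouples it is the KdV weight $\wt(u_j)=j+2$, after normalizing $a=-1$ and suppressing $\eps$. In this weight $u^3/6$ and $-u_1^2$ both have degree zero. The lowest-weight discrepancy is then a pair $(F,G)\in V_d\oplus W_{d+2}$ solving $\{\overline H,\overline{uF+G}\}+\{\overline F,\overline J\}=0$, with the full KdV Hamiltonians $H,J$. Its length-$L$ part sits at $\eps^{d-2L}$, so it spans several $\eps$-orders. Reaching this equation also requires two preliminary facts:
\begin{itemize}
\item the discrepancy of $H_2$ cannot start below that of $H_1$;
\item $H_2$ begins exactly with $J$.
\end{itemize}
Both follow from the vanishing of derivative-only KdV centralizers (Lemma~\ref{lem:topv}).

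\textbf{Second problem: the kernel you aim for has the wrong shape.} For $d=12$ one has $V_{12}=\operatorname{span}\{u_4^2,u_2^3\}$. The kernel $\mathscr K_{12}$ is spanned by a pair whose first component has nonzero coefficients on \emph{both} monomials. So $u_2^3$ at $\eps^6$ is rigidly tied to $u_4^2$ at $\eps^8$, and $u_2^g$ alone is never a kernel element. Your ``kernel at $\eps^6$'' cannot be computed in isolation.

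Similarly, $V_{10}=\C u_3^2$ lives at $\eps^6$. Its kernel vanishes not because of a nondegenerate triangular entry, but because of an obstruction that only appears at lengths four and five (Proposition~\ref{prop:quartic-vanishing}). Nonvanishing of the diagonal coefficient at $d=4g$ likewise requires actually constructing the kernel generator and computing $[u_2^g]F_g$ (Proposition~\ref{prop:diagonal-algebraic}).

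The missing ingredient is therefore the full kernel theorem, Theorem~\ref{thm:pair}, set up in the KdV grading. The monomial triangularity ``degenerating exactly at $u_2^g$'' that you hope for has no counterpart there.
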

The theorem is proved in Section~\ref{sec:nonlinear}. As an immediate consequence, one obtains

\begin{corollary}[Hodge universality at nonzero dispersion]\label{cor:universality}
Every tau-symmetric Hamiltonian deformation of the Riemann hierarchy with $a\ne0$ is normally Miura-equivalent to the unique Hodge-class DR hierarchy with the same leading and diagonal coefficients. Its generalized standard representative is even in $\eps$, and hence is in standard form.
\end{corollary}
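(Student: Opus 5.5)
The corollary should follow from Theorem~\ref{thm:generic} combined with the two results of Buryak--Rossi recalled in the introduction; no new computation is needed. Fix a tau-symmetric Hamiltonian deformation with $a\ne0$.

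\textbf{Step 1: reduce to generalized standard form.} By \cite[Theorem 3.8]{BR}, a unique normal Miura transformation brings it to generalized standard form. This form has $K=\partial_x$, and its first Hamiltonian density is $\frac{u^3}{6}+a\eps^2u_1^2+\dots$, possibly with odd powers of $\eps$. From this representative I read off $a\ne0$ and the diagonal coefficients $A_g$ of $u_2^g$ for $g\ge2$. Since the normal Miura transformation is unique, these data are invariants of the normal Miura class.

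\textbf{Step 2: produce the Hodge-class DR hierarchy.} By \cite[Theorem 3.9]{BR}, there is a Hodge-class DR hierarchy in standard form with exactly this $a$ and these $A_g$. Before applying this, I will check that the dispersion normalization of $\eps$ in \cite{BR} agrees with the one used here, rescaling $\eps$ if necessary; by the strong DR/DZ equivalence \cite[Theorem 1.2]{BLS} this also identifies it with a Hodge hierarchy of \cite{DLYZ}. A standard form is in particular a generalized standard form. So both the given representative and the DR hierarchy are tau-symmetric Hamiltonian deformations in generalized standard form with the same $a\ne0$ and the same $A_g$.

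\textbf{Step 3: apply Theorem~\ref{thm:generic}.} The theorem gives that the two have identical Hamiltonians. If ``Hamiltonians'' there means only $\overline H_1$, I would add one more step to recover the whole hierarchy. With $K=\partial_x$ fixed, the flows are determined by the Hamiltonians. Each higher Hamiltonian $\overline H_p$ commutes with $\overline H_1$ and has dispersionless limit $\int u^{p+2}/(p+2)!\,dx$. Since $\overline H_1$ has nondegenerate dispersionless part $u^3/6$, the standard centralizer argument (solving $\{\overline H_p,\overline H_1\}=0$ order by order in $\eps$) fixes each $\overline H_p$ up to a trivial Casimir. The tau-symmetry normalization then removes that ambiguity. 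Composing the Miura transformation from Step~1 with the identification from this step gives the normal Miura equivalence. The same argument applied to two Hodge-class DR hierarchies with equal data shows that the one with given leading and diagonal coefficients is unique.

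\textbf{Step 4: evenness.} The DR representative is even in $\eps$ by construction. Since the generalized standard representative of our deformation coincides with it, that representative is even too. Hence it is in standard form, which gives part~(1) of Conjecture~\ref{conj:DLYZ} on the locus $a\ne0$.

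The main obstacle is bookkeeping rather than mathematics. One must confirm that the notions of tau-symmetric deformation, normal Miura transformation, and dispersion parameter in \cite{BR}, \cite{DLYZ} and \cite{BLS} coincide with those of Section~\ref{sec:notation}. In particular, the normalization of $a$ and of the $A_g$ has to be matched across these sources, so that ``the same leading and diagonal coefficients'' is meaningful.
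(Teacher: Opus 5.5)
Your proposal is correct and follows the paper's own proof: generalized standard form by \cite[Theorem 3.8]{BR}, an even standard-form Hodge-class DR hierarchy with the same $a$ and $A_g$ by \cite[Theorem 3.9]{BR}, and identification of the two by Theorem~\ref{thm:generic}, which also gives evenness. Your Step 3 recovery of the higher Hamiltonians is already done at the end of the proof of Theorem~\ref{thm:generic}, where the Casimir $\int u\,dx$ is excluded by the fixed dispersionless part and homogeneity. Your uniqueness argument, applying Theorem~\ref{thm:generic} to two DR hierarchies with the same data, is a valid substitute for the paper's appeal to the triangular formula for the geometric parameters.
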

\begin{proof}
By \cite[Theorem 3.8]{BR}, put the given hierarchy in generalized standard form. By \cite[Theorem 3.9]{BR}, there is a Hodge-class DR hierarchy in even standard form with the same $a$ and $A_g$; its geometric parameters are uniquely determined by the triangular formula in that theorem. Theorem~\ref{thm:generic} identifies these two representatives.
\end{proof}

To prove Theorem~\ref{thm:generic}, we compare two hierarchies in generalized standard form: after a common normalization of $a$, their first discrepancy satisfies a homogeneous polynomial equation obtained by linearizing Poisson commutativity and specialness at the first two KdV Hamiltonians. The kernel theorem, Theorem~\ref{thm:pair}, shows that this discrepancy is determined by its diagonal coefficient, so equality of the diagonal data forces the two hierarchies to agree, as proved in Section~\ref{sec:nonlinear}.

Part~(2), concerning $a=0$, will be dealt with in a separate dedicated paper.

\section{Preliminaries}\label{sec:notation}
\subsection{Differential polynomials and local functionals}
We work over $\C$ and write
\[
 u_0=u,\qquad u_j=\partial_x^j u\ (j\ge1),\qquad
 \mathcal A_u=\C[[u]][u_1,u_2,\ldots].
\]
A differential polynomial involves only finitely many positive-order jets. The total derivative and the partial derivative with respect to the undifferentiated variable are
\[
 \partial_x=\sum_{j\ge0}u_{j+1}\frac{\partial}{\partial u_j},
 \qquad \partial_u=\frac{\partial}{\partial u_0}.
\]
In $\partial_u$, all $u_j$ with $j\ge1$ are held fixed. This operator is different from the variational derivative defined below.

A local functional is a class
\[
 \overline A=\int A\dd x\in
 \Lambda_u:=\mathcal A_u/(\partial_x\mathcal A_u+\C).
\]
The integral is formal: densities that differ by a total derivative or a constant define the same functional.
We use the variational derivative and the bracket associated with the Poisson operator $\partial_x$:
\begin{equation}\label{eq:variational}
 \frac{\delta\overline A}{\delta u}
 =\sum_{j\ge0}(-\partial_x)^j\frac{\partial A}{\partial u_j},
 \qquad
 \{\overline A,\overline B\}
 =\int\frac{\delta\overline A}{\delta u}\,
       \partial_x\frac{\delta\overline B}{\delta u}\dd x.
\end{equation}

Assign differential weight $j$ to $u_j$ and weight zero to $u$. A homogeneous Hamiltonian density is a formal series
\begin{equation}\label{eq:epshom}
 H_p=\sum_{N\ge0}\eps^N H_{p,N},\qquad
 \deg_{\partial}H_{p,N}=N.
\end{equation}
Equivalently, $\deg_{\partial}\eps=-1$ and $H_p$ has total differential degree zero.

\subsection{Hamiltonian, tau-symmetric, and special hierarchies}

A Hamiltonian deformation of the Riemann hierarchy consists of pairwise commuting flows
\[
 \frac{\partial u}{\partial t_p}
   =K\frac{\delta\overline H_p}{\delta u},\qquad
 \overline H_p=\int\frac{u^{p+2}}{(p+2)!}\dd x+O(\eps),
 \quad p\ge0,
\]
where $K=\partial_x+O(\eps)$ is a Poisson operator of differential degree one. Tau-symmetry means that one can choose densities $h_p=u^{p+2}/(p+2)!+O(\eps)$ of $\overline H_p$, together with $h_{-1}=u$, such that
\[
 \frac{\partial h_{p-1}}{\partial t_q}
 =\frac{\partial h_{q-1}}{\partial t_p},\qquad p,q\ge0.
\]
For the general operator $K$, this definition also includes
$K(1)=0$ and $K\,\delta\overline H_0/\delta u=u_x$.

A Miura transformation is a formal substitution
\[
 \widetilde u=u+\sum_{N\ge1}\eps^N f_N(u,u_1,\ldots),
 \qquad \deg_{\partial}f_N=N.
\]
It is \emph{normal} if it can be written as
\[
 \widetilde u=u+\partial_x^2P,\qquad
 P=\sum_{N\ge2}\eps^N P_N,\quad
 \deg_{\partial}P_N=N-2.
\]

By \cite[Theorem 3.8]{BR}, every tau-symmetric Hamiltonian deformation admits a unique normal Miura transformation putting it in \emph{generalized standard form}:

\begin{definition}[Generalized standard form]\label{def:standard}
A tau-symmetric Hamiltonian deformation is in \emph{generalized standard form} if its Poisson operator is $\partial_x$ and its first Hamiltonian is represented by the density
\begin{equation}\label{eq:standard}
 H_1=\frac{u^3}{6}+a\eps^2u_1^2
       +\sum_{N\ge4}\eps^N H_{1,N},
 \qquad a\in\C.
\end{equation}
Each $H_{1,N}$ is a constant-coefficient polynomial of differential weight $N$. Every monomial involves only $u_j$ with $j\ge2$, and its largest derivative index occurs at least twice. The hierarchy is in \emph{standard form} if $H_{1,N}=0$ for every odd $N$. For $g\ge2$, the diagonal coefficient is
\[
 A_g:=[\eps^{2g}u_2^g]H_1=[u_2^g]H_{1,2g},
\]
where square brackets denote coefficient extraction in the displayed density.
\end{definition}

Normal Miura transformations preserve tau-symmetry \cite[Lemma 4.3]{DLYZ}. By \cite[Remark 3.3]{BR}, for $K=\partial_x$ tau-symmetry is equivalent to \emph{specialness}, that is, $\partial_u\overline H_1=\overline H_0$, with $\overline H_0=\int u^2/2\dd x$. In these coordinates the two normalization identities above hold automatically. Specialness implies
\begin{equation}\label{eq:special}
 \partial_u\overline H_p=\overline H_{p-1},\qquad p\ge1.
\end{equation}
From now on, we therefore work with special Hamiltonian hierarchies in generalized standard form. Odd powers of $\eps$ are still allowed at this stage.

\subsection{Canonical densities and partition notation}
For a partition $\lambda=(\lambda_1\ge\cdots\ge\lambda_l\ge1)$, set
\[
 |\lambda|=\sum_{i=1}^l\lambda_i,\qquad
 l(\lambda)=l,\qquad u_\lambda=\prod_{i=1}^l u_{\lambda_i}.
\]
We use the partition sets of \cite{BR}:
\begin{align*}
 \mathcal P_N^\circ
 &=\{\lambda:|\lambda|=N,\ l(\lambda)\ge2,\ \lambda_1=\lambda_2\},\\
 \mathcal P_N'
 &=\{\lambda\in\mathcal P_N^\circ:\lambda_l\ge2\}.
\end{align*}
A density is \emph{canonical} if its positive-weight terms are linear combinations of $u_\lambda$ with $\lambda\in\mathcal P_N^\circ$, with coefficients in $\C[[u]]$. Its weight-zero term is chosen to vanish at $u=0$. This specifies a unique representative of each local functional \cite[Lemma 2.3]{BR}. The condition $\lambda_1=\lambda_2$ says exactly that the highest derivative appears at least twice.

A \emph{derivative-only} density has constant coefficients and contains no $u=u_0$.

In this notation, the corrections in Definition~\ref{def:standard} are
\[
 H_{1,N}=\sum_{\lambda\in\mathcal P_N'}a_\lambda u_\lambda,
 \qquad a_\lambda\in\C.
\]
In particular, $A_g=a_{(2^g)}$, where $(2^g)$ is the partition with $g$ entries equal to two.
The distinction between the two forms is only the possible presence of odd powers of $\eps$. Each higher term in \eqref{eq:standard} contains at least two jet factors, and none contains $u_1$. For example, the weight-six possibilities are $u_3^2$ and $u_2^3$; only the latter is diagonal.

The main result, Theorem~\ref{thm:generic}, is uniqueness from these coefficients.

\subsection{Normalization and gradings}\label{sec:normalization}
For $a\ne0$, rescale the dispersion parameter by $\widehat\eps=\sqrt{-a}\,\eps$, using the same choice for both hierarchies, and relabel it as $\eps$; henceforth $a=-1$. We suppress powers of $\eps$, recovering them by multiplying each monomial of differential weight $N$ by $\eps^N$.

Assign the KdV weight $\wt(u_j)=j+2$, and grade a monomial of length $L$ in $H_p$ by the coefficient degree $\mu=N+2L-2(p+2)$, counting undifferentiated $u$'s among the factors. For fixed $p,\mu$, the identity
\[
 N+2L=2(p+2)+\mu
\]
bounds both $N$ and $L$, and hence leaves only finitely many monomials. Thus each coefficient-degree component is a polynomial, even when the full density has formal power-series coefficients in $u$. We work componentwise in this grading, with the KdV background in degree zero.

\section{The KdV pair kernel theorem}\label{sec:finite}
With the normalization and suppression convention of Section~\ref{sec:normalization}, the first two KdV densities are
\begin{equation}\label{eq:base}
 H=\frac{u^3}{6}-u_1^2,\qquad
 J=\frac{u^4}{24}-u u_1^2+\frac65u_2^2.
\end{equation}
They satisfy $\{\overline H,\overline J\}=0$ and
$\partial_u\overline J=\overline H$. Their KdV weights are six and eight.
We call a local functional $\overline K$ a \emph{KdV centralizer} if $\{\overline K,\overline H\}=0$.

Consider infinitesimal variations $\overline H+\tau\overline F$ and
$\overline J+\tau\overline\eta$, where $\tau^2=0$.
The coefficient of $\tau$ in their commutativity equation is
$\{\overline H,\overline\eta\}+\{\overline F,\overline J\}=0$.
The linearized specialness relation is $\partial_u\overline\eta=\overline F$.
For a specified derivative-only canonical density $F$, this allows us to write
$\overline\eta=\overline{uF+G}$ with $G$ derivative-only at the weights considered here.
If $F$ has weight $d$, then $uF+G$ has weight $d+2$.

\begin{samepage}
\begin{definition}[The polynomial pair space]\label{def:pairspace}
For an integer $d\ge7$, define the following spaces of canonical \emph{densities}:
\begin{align*}
 V_d&=\operatorname{span}_{\C}
 \{u_\lambda:\lambda\in\mathcal P_N',\ N+2l(\lambda)=d\},\\
 W_{d+2}&=\operatorname{span}_{\C}
 \{u_\lambda:\lambda\in\mathcal P_N^\circ,\ N+2l(\lambda)=d+2\}.
\end{align*}
In each set, $N$ runs over nonnegative integers. Thus $V_d$ permits indices at least two, whereas $W_{d+2}$ also permits index one. Both require at least two factors and a repeated largest index. For example,
\[
 V_8=\C u_2^2,\qquad W_{10}=\C u_3^2.
\]
The pair kernel $\mathscr K_d$ consists of all $(F,G)\in V_d\oplus W_{d+2}$ such that
\begin{equation}\label{eq:pair}
 \{\overline H,\overline{uF+G}\}+\{\overline F,\overline J\}=0
 \quad\text{in }\Lambda_u.
\end{equation}
Equivalently, the bracket density
\[
 \frac{\delta\overline H}{\delta u}\,
 \partial_x\frac{\delta}{\delta u}(\overline{uF+G})
 +\frac{\delta\overline F}{\delta u}\,
 \partial_x\frac{\delta\overline J}{\delta u}
\]
is $\partial_x$-exact. This is a finite system of linear equations in the coefficients of $F$ and $G$.
\end{definition}
\end{samepage}
The multiplication $uF$ is performed on the specified canonical density. It is not multiplication of an arbitrary representative of the functional $\overline F$: replacing $F$ by $F+\partial_xQ$ changes $uF$ by a term congruent to $-u_1Q$ modulo total derivatives. Fixing the representative removes this ambiguity. In the smallest case, $d=8$, substituting $F=c u_2^2$ and $G=\beta u_3^2$ gives $\beta=-16c/7$, so $\mathscr K_8=\C\,(u_2^2,-16u_3^2/7)$.

\begin{theorem}\label{thm:pair}
For every integer $d\ge8$, the projection $(F,G)\mapsto F$ is injective on $\mathscr K_d$, and
\[
 \dim\mathscr K_d=
 \begin{cases}
 1,&d=4g,\quad g\ge2,\\
 0,&d\not\equiv0\pmod4.
 \end{cases}
\]
In the one-dimensional case its first component has a nonzero coefficient of $u_2^g$.
\end{theorem}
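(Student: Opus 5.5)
The plan separates the statement into an upper bound, $\dim\mathscr K_d\le 1$ with equality only when $d=4g$, and a matching existence statement. Injectivity of $(F,G)\mapsto F$ should be the easy first step. If $F=0$, then \eqref{eq:pair} reduces to $\{\overline H,\overline G\}=0$, so $\overline G$ is a derivative-only KdV centralizer. The KdV centralizer in $\Lambda_u$ is spanned by the KdV Hamiltonians, and none of these is derivative-only of length at least two. Alternatively, one can argue directly with the quadratic part of $H$, as below. Either way $G=0$.

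\textbf{Symbol reduction.} I would translate \eqref{eq:pair} into the language of symmetric polynomials. A derivative-only density of length $L$ corresponds, modulo $\partial_x$-exact terms, to a symmetric polynomial in $\xi_1,\dots,\xi_L$ modulo the ideal generated by $p_1=\xi_1+\dots+\xi_L$. The canonical monomials $u_\lambda$, $\lambda\in\mathcal P^\circ_N$, give a basis of this quotient. I would then split $H=H^{(3)}+H^{(2)}$ with $H^{(2)}=-u_1^2$, and likewise $J=J^{(4)}+J^{(3)}+J^{(2)}$, by the number of factors. The bracket with the quadratic part acts diagonally: on a length-$L$ density with symbol $f$ it produces $c\,p_3f$ for $H^{(2)}$ and $c'\,p_5f$ for $J^{(2)}$, modulo $p_1$. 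The remaining pieces change the length by $+1$ or $+2$ through explicit splitting and merging operators on symbols. The factor $u$ in $uF$ contributes an extra variable set to zero frequency, so it is also a length-raising operator.

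\textbf{Triangular analysis by length.} Let $L$ be the minimal length occurring in $F$, with component $F_L$. The lowest-length component of \eqref{eq:pair} gives a relation of the form
\[
 p_3\,g_L \equiv \kappa\, p_5\, f_L \pmod{p_1}
\]
plus contributions from $uF$ and $G$ of lower length; since $d$ is fixed, the lengths are tied to $N$. I would use the classical fact that, modulo $p_1$, $p_5\equiv\tfrac56\,p_2p_3$, so $p_3$ divides $p_5$ there. This leaves freedom in $g_L$ rather than forcing $f_L=0$, which means the constraint has to come from the next length. At length $L+1$ the equation reads
\[
 p_3\,g_{L+1}\equiv(\text{explicit operator})(f_L,f_{L+1},g_L)\pmod{p_1},
\]
and the solvability condition is that the right-hand side vanishes on the hypersurface $p_1=p_3=0$. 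I expect this to force $f_L$ to be a multiple of a specific polynomial. Iterating, the whole pair should be determined by $f_L$, and $f_L$ in turn by a single scalar that can be read off as the coefficient of $u_2^{L}$. A weight count finishes the upper bound. With $d=N+2L$ and all indices at least two, the diagonal monomial $u_2^g$ requires $d=4g$. When $d\not\equiv0\pmod 4$ the surviving scalar multiplies a non-existent monomial, so the kernel is $0$.

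\textbf{Existence and main obstacle.} For $d=4g$, existence should follow from \cite[Theorem 3.9]{BR} rather than from computation. Differentiating the Hodge-class DR family in standard form with respect to the parameter $A_g$, at fixed $a=-1$ and fixed $A_{g'}$ for $g'\ne g$, gives a first-order variation $(\overline H_1+\tau\overline F,\overline H_2+\tau\overline\eta)$. Its lowest coefficient-degree part is an element of $\mathscr K_{4g}$ whose $u_2^g$ coefficient equals $1$. The main obstacle is the second-length solvability step: showing that divisibility by $p_3$ modulo $p_1$ at length $L+1$ really cuts $f_L$ down to one dimension, and that this surviving direction has nonzero diagonal coefficient. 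My first attempt would use the parametrization of $\{p_1=p_3=0\}$ to evaluate the obstruction explicitly, then check it against the computed case $\mathscr K_8$.
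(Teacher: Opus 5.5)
Your injectivity step is fine. The upper bound, however, has a genuine gap exactly where the statement is hardest: the vanishing of $\mathscr K_d$ for $d\equiv2\pmod4$.

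Your weight count presupposes that the scalar controlling a kernel pair is the coefficient of $u_2^{L}$ at the minimal length $L$ of $F$. That is false. For even $d=2m+4$ a nonzero pair has minimal length $2$, and the controlling scalar is the quadratic coefficient $c=[u_m^2]F$; Proposition~\ref{prop:quad} proves this coefficient is injective. The monomial $u_m^2$ lies in $V_d$ for every even $d$. Already $V_{10}=\C u_3^2$, and no degree count excludes a pair $(u_3^2,G)$ with $G\in W_{12}$.

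Excluding it requires an actual obstruction computation. The paper normalizes $c=1$ and uses derivative-only canonical changes $\overline F\mapsto\overline F+\{\overline K,\overline H\}$, $\overline\eta\mapsto\overline\eta+\{\overline K,\overline J\}$ to reach $F_3=0$ and $F_4=e_4B(e_2,e_4)$. It then computes the inhomogeneous quartic term $J_m$ forced by $c=1$. From the length-five equation modulo $(e_3,e_5^2)$ it extracts $\mathscr DB=\Omega_m$. The moment functional $\mathcal A$ kills the image of $\mathscr D$, while $\mathcal A(\Omega_m)\ne0$ for odd $m$. Nothing in your plan detects this parity of $m$.

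The length-by-length mechanism is also off, in three ways.

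\emph{The divisibility identity.} The identity $p_5\equiv\frac56p_2p_3\pmod{p_1}$ holds only in at most four momenta. In general $p_5=5(e_5-e_2e_3)$ when $e_1=0$. So for $l\ge5$, once lower components vanish, the leading equation $-6e_3\eta_l+12(e_5-e_2e_3)F_l=0$ does constrain $F_l$: it forces $e_3\mid F_l$. That is exactly what makes such a component removable by a canonical change.

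\emph{Gauge directions at lengths $3$ and $4$.} Where your identity does hold, the equation at that length determines only $\eta_l+2e_2F_l$. $F_3$ and the $e_3$-divisible part of $F_4$ are gauge directions. Inside $\mathscr K_d$ they are excluded only by the global condition that $F$ contain no $u_1$, which the paper enforces at the end through Lemma~\ref{lem:topv}. So the claim that iteration determines the pair from $f_L$ needs that gauge-fixing step and the final $u_1$ argument.

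\emph{The solvability condition.} At length five it is membership in $(e_3,e_5^2)$, not vanishing on $\{e_1=e_3=0\}$, because $12e_5F_5$ with $e_5\mid F_5$ survives there.

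Finally, differentiating the family of \cite[Theorem 3.9]{BR} in $A_g$ is a legitimate alternative to the paper's heat-kernel construction for existence. But the lowest coefficient-degree part of $\partial_{A_g}\overline H_1$ is not a priori of KdV weight $4g$; for instance $u_g^2$ at order $\eps^{2g}$ has KdV weight $2g+4$. An induction on $g$ repairs this. A lower component lies in some $\mathscr K_{d'}$ with $d'<4g$, and if $d'=4g'$ its diagonal coefficient is $\partial_{A_g}A_{g'}=0$. That induction, however, already needs $\dim\mathscr K_{d'}\le1$ and the vanishing for $d'\not\equiv0\pmod4$, which are precisely the missing steps.
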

Theorem~\ref{thm:pair}, including existence and diagonal nonvanishing, is proved in Section~\ref{sec:pair-proof}, using Lemma~\ref{lem:topv} and the polynomial KdV hierarchy. An explicit local construction and the nonzero ratio of its diagonal and quadratic coefficients are given in Subsection~\ref{sec:diagonal}.

\section{Derivative-only KdV brackets}\label{sec:canonical}
For a derivative-only monomial $M$, let $N$ and $L$ be as in Section~\ref{sec:normalization}, and let $r$ be the exponent of $u_1$ in $M$.

\begin{lemma}\label{lem:topv}
Let $K\ne0$ be a derivative-only canonical density, homogeneous of KdV weight $k \geq 6$. Then:
\begin{enumerate}
    \item The canonical density $B$ of $\{\KK,\overline H\}$ has KdV weight $k+3$ and contains a nonzero term involving $u_1$. In particular, $B\notin V_{k+3}$.
    \item Every derivative-only KdV centralizer of positive KdV weight is zero.
\end{enumerate}

\end{lemma}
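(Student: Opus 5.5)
The plan is a direct computation of the bracket, followed by a triangularity argument in the length of monomials. Write $K=\sum K^{(L)}$ by length. Since $\delta\overline H/\delta u=u^2/2+2u_2$, we have
\[
 \{\KK,\overline H\}=\int\frac{\delta\KK}{\delta u}\,(u u_1+2u_3)\dd x .
\]
KdV weights add: $\delta\KK/\delta u$ has weight $k-2$ and both $uu_1$ and $u_3$ have weight $5$, giving weight $k+3$. For the first term I would expand $\delta\KK/\delta u=\sum_j(-\partial_x)^j\partial K/\partial u_j$ and move all derivatives onto $uu_1$:
\[
 \int\frac{\delta\KK}{\delta u}\,uu_1\dd x=\int\sum_{j\ge1}\frac{\partial K}{\partial u_j}\sum_{i=0}^{j}\binom ji u_iu_{j+1-i}\dd x .
\]
In the $i=0$ terms, $\sum_j u\,u_{j+1}\partial K/\partial u_j=u\,\partial_xK$, and $u\,\partial_xK\equiv-u_1K$. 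So the undifferentiated $u$ disappears and $B$ is again derivative-only. The term $2u_3\,\delta\KK/\delta u$ preserves length. The $uu_1$ term raises length by one. Hence the length-$(L_{\max}+1)$ part of $B$ comes only from $K^{(L_{\max})}$ through the $uu_1$ term.

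Next, isolate the $u_1$-contributions in this raised-length part. The $i=1$ and $i=j$ summands give $(j+1)u_1u_j\,\partial K/\partial u_j$; for $j=1$ the two coincide and give $2u_1^2\partial K/\partial u_1$. Together with $-u_1K$ this produces
\[
 u_1\Big(\sum_j(j+1)u_j\partial_{u_j}-1\Big)K=u_1\,(N+L-1)\,K
\]
on each component of fixed $(N,L)$. Here $N+L-1>0$. Moreover $u_1K$ is canonical: multiplying a monomial with repeated largest index by $u_1$ keeps the largest index repeated, including the case $K=u_1^m$, $m\ge2$. The remaining summands, $2\le i\le j-1$, have the form $u_iu_{j+1-i}\,\partial K/\partial u_j$. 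They contain no new $u_1$ factor, but they need not be canonical.

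The main obstacle is to show that putting these summands into canonical form cannot cancel $u_1(N+L-1)K$. I would order the monomials of length $L+1$ by the exponent $r$ of $u_1$. Let $r_0$ be the maximal $u_1$-exponent among the monomials of $K^{(L_{\max})}$. The target terms $u_1(N+L-1)M$ have $u_1$-exponent $r_0+1$. The noncanonical summands have $u_1$-exponent at most $r_0$. I would then check that the canonical reduction never increases the $u_1$-exponent to $r_0+1$ with the same remaining factors. The reduction rewrites a monomial whose largest index $m$ is unique, via $u_m P\equiv-u_{m-1}\partial_xP$. If that fails, I would instead pair $B$ against a linear functional on $\Lambda_u$ that vanishes on total derivatives and detects the top $u_1$-power coefficient. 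With that established, the $u_1^{r_0+1}$-coefficient of the canonical density $B$ equals $(N+L-1)$ times a nonzero coefficient of $K$. Therefore $B$ has a nonzero term involving $u_1$, and in particular $B\notin V_{k+3}$.

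For part 2, let $\KK$ be a derivative-only KdV centralizer of positive weight and take its canonical density. Bracketing with $\overline H$ shifts KdV weight by exactly $3$, so each homogeneous weight component is itself a centralizer. Canonical derivative-only monomials have length at least two, so their weight is at least $6$, with $u_1^2$ the minimum. Part 1 then shows that every nonzero component has nonzero bracket. Since canonical densities represent functionals uniquely \cite[Lemma 2.3]{BR}, every component vanishes and $\KK=0$.
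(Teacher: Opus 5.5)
Your strategy is the paper's: isolate the terms of top $u_1$-exponent produced by the $uu_1$ part of the flow, show that their coefficient is positive, and argue that canonical reduction can neither produce nor cancel them. Your extra filtration by maximal length is legitimate, because $\partial_x$ preserves length and so canonical reduction does too. It is not needed, however. The Airy terms $2u_{j+3}\,\partial K/\partial u_j$ and the interior terms $\binom ji u_iu_{j+1-i}\,\partial K/\partial u_j$, $2\le i\le j-1$, never raise the $u_1$-exponent, so the paper simply lets $r$ be the top $u_1$-exponent of all of $K$. Two points need repair.

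First, the leading coefficient is miscounted. For $j=1$ the sum $\sum_{i=1}^{j}\binom ji u_iu_{j+1-i}$ has the single summand $u_1^2$. It therefore contributes $u_1^2\,\partial K/\partial u_1$, not $2u_1^2\,\partial K/\partial u_1$. On a monomial $M$ with $u_1$-exponent $r$, the coefficient of $u_1M$ is $N+L-r-1$, not $N+L-1$. For example, $K=u_1^2$ gives $B=u_1^3$, not $3u_1^3$. The argument survives only because $r\le L$ and $N\ge2$ give $N+L-r-1\ge N-1>0$, so you need to use that bound.

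Second, the step you call the main obstacle is left as a plan with a fallback, and it is the crux of the proof. It closes as follows.
\begin{itemize}
\item Write a non-canonical derivative-only monomial with unique top jet $u_m$ as $u_mu_{m-1}^sQ$, with $Q$ a polynomial in $u_1,\ldots,u_{m-2}$. Then $(s+1)u_mu_{m-1}^sQ\equiv-u_{m-1}^{s+1}\partial_xQ$. Splitting off $u_{m-1}^s$ is what turns your $u_mP\equiv-u_{m-1}\partial_xP$ into an actual reduction.
\item The operator $\partial_x$ never creates a factor $u_1$. For $m\ge3$, the prefactor $u_{m-1}^{s+1}$ contains no $u_1$ either.
\item For $m=2$, the factor $Q$ is constant, and $u_2u_1^s=\partial_x\bigl(u_1^{s+1}\bigr)/(s+1)$ is exact. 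This is the only place where a new $u_1$ could have appeared.
\end{itemize}
Iterating, with the top index strictly decreasing, shows that canonical reduction never raises the $u_1$-exponent. So everything of $u_1$-degree at most $r_0$ stays in that range. The terms $u_1M$ are already canonical and survive untouched, and no auxiliary linear functional is required.

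With these two fixes, your part~1 and your weight-by-weight deduction of part~2 are correct.
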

\begin{proof}
Decompose the Hamiltonian into its Riemann and Airy parts:
\[
 \overline H=\int\frac{u^3}{6}\dd x+\int(-u_1^2)\dd x.
\]
By the variational formula and integration by parts,
\[
 \{\KK,\overline H\}
 =\int\sum_{j\ge1}\frac{\partial K}{\partial u_j}
       \partial_x^j(uu_1+2u_3)\dd x.
\]
Subtracting the total derivative $\partial_x(uK)$ from the Riemann contribution gives the derivative-only representative
\begin{align*}
 R_K
 &=\sum_{j\ge1}\frac{\partial K}{\partial u_j}
       \bigl(\partial_x^j(uu_1)-u u_{j+1}\bigr)-u_1K\\
 &=u_1^2\frac{\partial K}{\partial u_1}
   +u_1\sum_{j\ge2}(j+1)u_j\frac{\partial K}{\partial u_j}-u_1K
   +\sum_{j\ge3}\frac{\partial K}{\partial u_j}
       \sum_{a=2}^{j-1}\binom ja u_a u_{j+1-a}.
\end{align*}
The Airy contribution has representative
\[
 A_K=2\sum_{j\ge1}u_{j+3}\frac{\partial K}{\partial u_j}.
\]
Neither $R_K$ nor $A_K$ is necessarily canonical. By definition, $B$ is the canonical representative of $R_K+A_K$.

We first check that canonical reduction does not increase the exponent of $u_1$ in a derivative-only polynomial. A monomial with a unique highest jet $u_m$, $m\ge2$, can be written as $u_m u_{m-1}^sQ$, with $Q$ involving only $u_1,\ldots,u_{m-2}$. Integration by parts gives
\[
 u_m u_{m-1}^sQ
 \equiv-\frac{u_{m-1}^{s+1}}{s+1}\partial_xQ
 \pmod{\partial_x\mathcal A_u}.
\]
For $m\ge3$, the prefactor introduces no $u_1$, and differentiating $Q$ cannot increase its $u_1$-exponent. For $m=2$, $Q$ is constant and the monomial is exact. A linear term $u_1$ is also exact. Repeating this reduction proves the assertion.

We use the Riemann leading-term calculation of \cite[equation (6.3)]{DLYZ}. Now write $K=\sum_M c_M M$, and let $r$ be the largest $u_1$-exponent among its nonzero monomials. For $M=\prod_{j\ge1}u_j^{m_j}$, write
\[
 N(M)=\sum_{j\ge1}j m_j,\qquad L(M)=\sum_{j\ge1}m_j.
\]
If $m_1=r$, the first three terms of $R_K$ send $M$ to
\[
 \left(r+\sum_{j\ge2}(j+1)m_j-1\right)u_1M
 =\bigl(N(M)+L(M)-r-1\bigr)u_1M.
\]
The last sum in $R_K$ involves only new factors of index at least two, so it has $u_1$-degree at most $r$. The same holds for $A_K$. Monomials of $K$ with $u_1$-exponent below $r$ also contribute only up to degree $r$. By the reduction just proved, these bounds remain valid after integration by parts. Moreover, each $u_1M$ in the displayed leading term is already canonical, since multiplication by $u_1$ preserves the repeated-highest-index condition. Therefore
\[
 B=\sum_{\substack{M:\,m_1=r}}c_M
       \bigl(N(M)+L(M)-r-1\bigr)u_1M+B_{\le r},
 \qquad \deg_{u_1}B_{\le r}\le r.
\]
A nonzero derivative-only canonical monomial has $N(M)\ge L(M)\ge2$, so
\[
 N(M)+L(M)-r-1\ge N(M)-1>0.
\]
The monomials $u_1M$ are distinct. Hence the leading sum cannot vanish, and $B$ has a nonzero term of $u_1$-degree $r+1$.

Finally, for the second part, note that a derivative-only centralizer would have $B=0$, which is impossible for nonzero canonical $K$ by the argument above.
\end{proof}

\section{Proof of Theorem~\ref{thm:pair}}\label{sec:pair-proof}

We split the proof of the theorem into five steps:
\begin{enumerate}
    \item Prove injectivity of the projection $(F, G) \in  \mathscr K_d \rightarrow F \in V_d$.
    \item Prove $\dim \mathscr K_d = 0$ for odd $d$.
    \item Prove $\dim \mathscr K_d \leq 1$ for even $d$.
    \item Prove $\dim \mathscr K_d = 0$ for $d \equiv 2 \pmod 4$.
    \item Prove $\dim \mathscr K_{4g} = 1$ by constructing a pair in every $\mathscr K_{4g}$ with nonzero quadratic and diagonal coefficients.
\end{enumerate}
Step 1 follows directly from Lemma~\ref{lem:topv}. After introducing symmetric momentum symbols in Subsection~\ref{sec:symbols}, Proposition~\ref{prop:quad} proves Steps 2 and 3 together. Proposition~\ref{prop:quartic-vanishing} proves Step 4 by deriving a necessary quartic equation and showing that it has no polynomial solution. Proposition~\ref{prop:diagonal-algebraic} supplies the construction and diagonal nonvanishing required for Step 5.

\paragraph{Step 1: injectivity of the first component.}
If $(F,G)\in\mathscr K_d$ has $F=0$, then \eqref{eq:pair} reduces to
\[
 \{\overline H,\overline G\}=0.
\]
Since $G$ is derivative-only of positive KdV weight $d+2$, Lemma~\ref{lem:topv} gives $\overline G=0$. Its canonical representative is therefore $G=0$, proving injectivity.

\subsection{Symmetric momentum symbols}\label{sec:symbols}
We use the standard symbolic representation of differential polynomials, originating in the work of Gel'fand--Dikii \cite{GD}; see \cite{MNW} for the symmetrization convention used here. The variables $\xi_1,\ldots,\xi_l$ are formal momenta, one for each factor of a monomial of length $l$.

For a polynomial density homogeneous of length $l$, define its symbol by linear extension of
\[
 \sigma_l(u_{i_1}\cdots u_{i_l})
 =\frac1{l!}\sum_{\pi\in S_l}
       \xi_{\pi(1)}^{i_1}\cdots\xi_{\pi(l)}^{i_l},
 \qquad i_1,\ldots,i_l\ge0.
\]
Here $S_l$ is the permutation group, and $i_j=0$ represents an undifferentiated factor $u$. Permuting the factors of a differential monomial does not change its symbol. Moreover, these averaged monomials form a basis of $\C[\xi_1,\ldots,\xi_l]^{S_l}$, so $\sigma_l$ is a linear isomorphism. Its inverse expands a symmetric polynomial into ordinary monomials and replaces each $\xi_1^{i_1}\cdots\xi_l^{i_l}$ by $u_{i_1}\cdots u_{i_l}$, retaining its coefficient. This is a correspondence at fixed length, rather than a substitution for the individual jet variables. Products of densities correspond to symmetrized products in the combined set of momenta
\[
\sigma_{l+k}(PQ)
=\operatorname{Sym}_{l+k}\!\left[
\sigma_l(P)(\xi_1,\ldots,\xi_l)\,
\sigma_k(Q)(\xi_{l+1},\ldots,\xi_{l+k})
\right],
\]
where $\operatorname{Sym}_{l+k}$ denotes averaging over all permutations
of $\xi_1,\ldots,\xi_{l+k}$.

The symbol of a length-$l$ monomial of differential weight $N$ has ordinary polynomial degree $N$; its KdV weight is $N+2l$. We apply the correspondence separately to each length component, and subscripts such as $F_l$ and $\eta_l$ below denote the symbols of those components. In particular, the number of momentum variables varies with $l$.

To pass from densities to local functionals, observe that the product rule gives
\[
 \sigma_l(\partial_x P)
 =(\xi_1+\cdots+\xi_l)\sigma_l(P).
\]
Define the elementary symmetric polynomials in the current $l$ momenta by
\[
 e_j=\sum_{1\le a_1<\cdots<a_j\le l}\xi_{a_1}\cdots\xi_{a_j},
 \qquad e_1=\xi_1+\cdots+\xi_l,\qquad e_l=\xi_1\cdots\xi_l.
\]
Since $\sigma_l$ is an isomorphism, a density is a total derivative if and only if its symbol is divisible by $e_1$. Thus, at fixed positive length, symbols of local functionals are represented in the quotient
\[
 \C[\xi_1,\ldots,\xi_l]^{S_l}/(e_1)
 =\C[e_1,\ldots,e_l]/(e_1)
 \cong\C[e_2,\ldots,e_l].
\]

A derivative-only monomial has every $i_j\ge1$, so its symbol is divisible by $e_l$. Conversely, every symmetric polynomial divisible by $e_l$ comes from a derivative-only density. Similarly, requiring every $i_j\ge2$ gives divisibility by $e_l^2$. After passage to the quotient, these subspaces are the ideals $(e_l)$ and $(e_l^2)$ in the functional symbol ring.

To express the specialness relation in symbols, we also need ordinary differentiation with respect to \(u\). This removes an undifferentiated factor, and consequently, for $l\ge2$,
\[
 \sigma_{l-1}(\partial_u P)
 =l\,\sigma_l(P)(\xi_1,\ldots,\xi_{l-1},0).
\]
The same formula holds for local functionals modulo the sum of the remaining momenta.

For densities $f,g$ of lengths $l,k\ge2$, write $f_l=\sigma_l(f)$ and $g_k=\sigma_k(g)$. Put $n=l+k-2$ and $s_I=\sum_{i\in I}\xi_i$ for $I\subset\{1,\ldots,n\}$. Write $\xi_I$ for the tuple of variables indexed by $I$, and $I^c$ for its complement. The symbol of $\{\overline f,\overline g\}$ is
\begin{equation}\label{eq:symbracket}
 \frac{lk}{\binom n{l-1}}
 \sum_{|I|=l-1}(-s_I)
 f_l(\xi_I,-s_I)\,g_k(\xi_{I^c},s_I),\qquad e_1=0.
\end{equation}
The factors $l$ and $k$ count the choices in the two variational derivatives. Integration by parts replaces the momentum of each removed factor by minus the sum of the momenta left in that density; the final $\partial_x$ contributes $-s_I$. Averaging over the $\binom n{l-1}$ choices of $I$ gives \eqref{eq:symbracket}.

\subsubsection{The quadratic coefficients}
To simplify the pair equation, we use the following transformations. An infinitesimal local canonical change generated by a derivative-only $K$ acts on a pair of variations by
\[
\overline F'=\overline F+\{\overline K,\overline H\},
\qquad
\overline\eta'=\overline\eta+\{\overline K,\overline J\}.
\]
The Jacobi identity gives
\begin{align*}
\{\overline H,\overline\eta'\}
+\{\overline F',\overline J\}
&=\{\overline H,\overline\eta\}
  +\{\overline F,\overline J\}\\
&\quad+\{\overline H,\{\overline K,\overline J\}\}
  +\{\{\overline K,\overline H\},\overline J\}\\
&=\{\overline K,\{\overline H,\overline J\}\}=0,
\end{align*}
where we used \eqref{eq:pair} and
$\{\overline H,\overline J\}=0$.
The functional $\int u^2/2\dd x$ generates spatial translation, so its bracket with any local functional vanishes. Together with $\partial_uK=0$, this shows that $\overline F'$ remains derivative-only as a functional and that $\partial_u\overline\eta'=\overline F'$ is preserved.

Suppose $d=2m+4$ is even. Here \emph{quadratic} means length two in the jets, not degree two in the variation parameter. By canonicality, the quadratic parts of $F$ and $\eta$ are respectively $c u_m^2$ and $\beta u_{m+1}^2$; the term $uF$ has length at least three.

Write $F_3$ for the symbol of the length-three component of $F$. Every monomial in this component has the form $u_i u_j u_k$ with $i,j,k\ge2$, so every term of its symbol is divisible by $\xi_1^2\xi_2^2\xi_3^2=e_3^2$. After setting $e_1=0$, we can therefore write
\[
 F_3=e_3^2Q(e_2,e_3).
\]
Here and below $p_j=\sum_{a=1}^l\xi_a^j$ denotes the power sum in the current $l$ momenta. In three variables, Newton's identities give $p_3=3e_3$.

Choose a derivative-only density $K$ of length three whose symbol is
\[
 K_3=-\frac{F_3}{2p_3}=-\frac{e_3Q}{6}.
\]
Such a density exists because this is a polynomial divisible by $e_3$. The quadratic part $-u_1^2$ of $H$ generates the Airy flow $u_t=2u_3$, so the length-three symbol of $\{\overline K,\overline H\}$ is $2p_3K_3$. The cubic part $u^3/6$ contributes only in length four. Hence the transformed first component satisfies
\[
 F_3'=F_3+2p_3K_3=0.
\]
A bracket of densities of lengths $a,b$ has length $a+b-2$. Since $K$ has length three and $H,J$ have no terms of length below two, this transformation changes neither quadratic coefficient $c$ nor $\beta$. Denote the transformed cubic second-component symbol by $U_3:=\eta_3'$.

For any length $l$, the quadratic background densities $-u_1^2$ and $\frac65u_2^2$ contribute to the commutativity equation
\begin{equation}\label{eq:airy}
 -2p_3\eta_l+\frac{12}{5}p_5F_l.
\end{equation}
Note the equation above holds for both pairs $(F, \eta)$ and $(F', \eta')$. We apply it to the latter, and obtain the symbol for the transformed pair at length three, so \eqref{eq:airy} becomes $-2p_3U_3=-6e_3U_3$. The remaining length-three terms pair the cubic background densities $\frac16 u^3$ and $-uu_1^2$ with the quadratic variations $\beta u_{m+1}^2$ and $c u_m^2$, respectively. By \eqref{eq:symbracket}, the contribution reads
\begin{align*}
-\frac{(-1)^m}{3} \beta p_{2m+3} + \frac{2(-1)^m }{3}c e_2p_{2m+1}.
\end{align*}
Thus commutativity at length three gives
\[
 -6e_3U_3+\frac{(-1)^m}{3}
 \bigl(-\beta p_{2m+3}+2c e_2p_{2m+1}\bigr)=0.
\]
Newton's recurrence in three variables with $e_1=0$,
\[
 p_n=-e_2p_{n-2}+e_3p_{n-3}\quad(n\ge4),
 \qquad p_2=-2e_2,\quad p_3=3e_3,
\]
shows that every odd power sum $p_{2r+1}$, $r\ge1$, is divisible by $e_3$. Therefore, we can solve the equation for a polynomial $U_3$:
\begin{equation}\label{eq:U}
 U_3=\frac{(-1)^m}{18e_3}
 \bigl(-\beta p_{2m+3}+2c e_2p_{2m+1}\bigr).
\end{equation}

We next use specialness to determine $\beta$ in terms of $c$. The relation $\partial_u\overline\eta'=\overline F'$ gives
\[
 3U_3(\xi,-\xi,0)=c(-1)^m\xi^{2m}.
\]
To evaluate \eqref{eq:U}, divide by $e_3$ as a polynomial before making this substitution. The same recurrence gives
\[
 \left.\frac{p_{2r+1}}{e_3}\right|_{(\xi,-\xi,0)}
 =(2r+1)\xi^{2r-2},\qquad r\ge1.
\]
Together with $e_2(\xi,-\xi,0)=-\xi^2$, this yields
\begin{equation}\label{eq:beta}
 \beta=-\frac{4(m+2)}{2m+3}c.
\end{equation}
In particular, $c=0$ forces $\beta=0$ and $U_3=0$.

For $c\ne0$, scale the pair so that $c=1$. Substituting \eqref{eq:beta} into \eqref{eq:U} gives
\[
 U_3=\frac{(-1)^m}{9e_3}
 \left(\frac{2m+4}{2m+3}p_{2m+3}+e_2p_{2m+1}\right).
\]
Applying Newton's recurrence
$e_2p_{2m+1}=e_3p_{2m}-p_{2m+3}$, this simplifies to
\begin{equation}\label{eq:Usimple}
 U_3=\frac{(-1)^m}{9}
 \left(p_{2m}+\frac{p_{2m+3}}{(2m+3)e_3}\right).
\end{equation}

For odd $d$, neither variation has a quadratic functional part, because every quadratic canonical monomial $u_j^2$ has even KdV weight $2j+4$. The same cubic elimination applies; with both quadratic parts zero, the cubic pair equation reduces to $-6e_3U_3=0$, so $U_3=0$ as well.

We summarize the results of this section in a lemma:

\begin{lemma}\label{lem:cubic-reduction}
Let $(F,G)\in\mathscr K_d$, with $d\ge8$, and set $\eta=uF+G$. Then the following are true:

\begin{enumerate}
    \item There exists a derivative-only polynomial density $K$, homogeneous
of length three and KdV weight $d-3$, such that the transformed
variations
\[
\overline F'=\overline F+\{\overline K,\overline H\},
\qquad
\overline\eta'=\overline\eta+\{\overline K,\overline J\}
\]
satisfy
\[
\{\overline H,\overline\eta'\}
+\{\overline F',\overline J\}=0,
\qquad
\partial_u\overline\eta'=\overline F',
\]
have the same quadratic components as the original variations,
and obey $F'_3=0$.
\item If $d=2m+4$ is even, write the quadratic canonical densities as
$c u_m^2$ and $\beta u_{m+1}^2$. Then
\[
\beta=-\frac{4(m+2)}{2m+3}c,
\qquad
\eta'_3=\frac{(-1)^m c}{9}
\left(p_{2m}+\frac{p_{2m+3}}{(2m+3)e_3}\right).
\]
In particular, if $c=0$, both transformed variations have
no components of length below four.

\item If $d$ is odd, both transformed variations have
no components of length below four.
\end{enumerate}
\end{lemma}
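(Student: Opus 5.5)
The lemma only collects the computations of the preceding subsection, so the plan is to assemble them carefully in order.

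\textbf{Part (1).} Write $F_3$ for the length-three symbol of the canonical $F\in V_d$. Every factor has index at least two, so $F_3$ is divisible by $e_3^2$ modulo $e_1$, that is, $F_3=e_3^2Q$. Take $K$ to be the derivative-only length-three density with symbol $-e_3Q/6$. It exists because the symbol is divisible by $e_3$, and it has KdV weight $d-3$ by degree counting: the symbol degree drops by three. The Jacobi computation shows that the transformed pair still satisfies the linearized commutativity equation. Specialness is preserved because $\partial_u\{\overline K,\overline J\}=\{\overline K,\partial_u\overline J\}+\{\partial_u\overline K,\overline J\}$. This derivation property of $\partial_u$ on brackets with $K=\partial_x$ holds since $\partial_u$ commutes with $\delta/\delta u$ up to the translation functional $\int u^2/2$, which is central. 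With $\partial_u\overline K=0$ and $\partial_u\overline J=\overline H$, we get $\partial_u\overline\eta'=\overline F'$. The same identity $\partial_u\{\overline K,\overline H\}=\{\overline K,\overline H_0\}=0$ shows that $\overline F'$ is still derivative-only. Brackets with a length-three $K$ have length at least three, so the quadratic parts are unchanged. Only the Airy part of $H$ contributes at length three, giving $F_3'=F_3+2p_3K_3=F_3-e_3^2Q=0$, using $p_3=3e_3$.

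\textbf{Part (2).} Project the transformed commutativity equation onto length three. The quadratic background gives $-2p_3\eta_3'+\tfrac{12}{5}p_5F_3'=-6e_3\eta_3'$. The cubic backgrounds paired with $\beta u_{m+1}^2$ and $cu_m^2$ are evaluated via \eqref{eq:symbracket}. Dividing by $e_3$ is legitimate because every odd power sum is divisible by $e_3$ (Newton's recurrence), and this yields \eqref{eq:U}. Next, restrict specialness to length two: $3\eta_3'(\xi,-\xi,0)=c(-1)^m\xi^{2m}$, using $\sigma_1$ of the quadratic part of $F'$, which is $cu_m^2$. Evaluating $p_{2r+1}/e_3$ at $(\xi,-\xi,0)$ by induction on the recurrence gives $(2r+1)\xi^{2r-2}$. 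This produces a linear relation between $\beta$ and $c$, whose solution is \eqref{eq:beta}. Substituting back and using $e_2p_{2m+1}=e_3p_{2m}-p_{2m+3}$ gives the displayed formula. If $c=0$, then $\beta=0$ and $\eta_3'=0$. Together with $F_3'=0$ and the vanishing quadratic parts, no component of length below four remains, since length one is excluded for functionals.

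\textbf{Part (3).} For odd $d$ the quadratic parts vanish for weight reasons, since $u_j^2$ has even KdV weight. The length-three equation then reduces to $e_3\eta_3'=0$, and the conclusion follows as before.

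\textbf{Main obstacle.} The delicate step is the careful evaluation of the cubic-background contributions through \eqref{eq:symbracket}, including signs and the combinatorial prefactors. A second point to verify is that the specialness restriction uses the polynomial quotient by $e_3$ before substituting $(\xi,-\xi,0)$, where $e_3$ vanishes. I would double-check these against the case $d=8$, where $\beta=-16c/7$ must come out.
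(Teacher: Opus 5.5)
Your proposal is correct and follows the paper's argument step for step. It uses the same generator $K_3=-e_3Q/6$, cancelled through the Airy bracket $2p_3K_3$; the Jacobi identity and the derivation property of $\partial_u$ to preserve the pair equation and specialness; the length-three projection giving \eqref{eq:U}; and the evaluation of $p_{2r+1}/e_3$ at $(\xi,-\xi,0)$ after polynomial division, yielding \eqref{eq:beta}. The two cubic-background terms you deferred do come out of \eqref{eq:symbracket} as $-\tfrac{(-1)^m}{3}\beta p_{2m+3}$ and $\tfrac{2(-1)^m}{3}c\,e_2p_{2m+1}$.

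One small correction: $\partial_u$ commutes with $\delta/\delta u$ exactly, because $[\partial_u,\partial_x]=0$, so the derivation property needs no translation-functional caveat. Centrality of $\int u^2/2\dd x$ is used only to conclude $\partial_u\{\overline K,\overline H\}=\{\overline K,\overline H_0\}=0$.
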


\subsection{Injectivity of the quadratic coefficient}\label{sec:quadratic-injectivity}
We prove Steps 2 and 3 together by showing that a pair is determined by the quadratic coefficient of its first component.
\begin{proposition}\label{prop:quad}
A pair in $\mathscr K_d$ whose first component has zero quadratic part is zero. Thus $\dim\mathscr K_d\le1$ in even $d$, and $\mathscr K_d=0$ in odd $d$.
\end{proposition}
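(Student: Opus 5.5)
Given a pair $(F,G)\in\mathscr K_d$ whose first component has zero quadratic part, I apply Lemma~\ref{lem:cubic-reduction}. In the even case, $c=0$ forces $\beta=0$ and $\eta_3'=0$; in the odd case both transformed variations have no components of length below four. Either way I obtain a transformed pair $(F',\eta')$ with $F'_l=\eta'_l=0$ for $l\le3$ which still satisfies the linearized commutativity equation and $\partial_u\overline\eta'=\overline F'$. The plan is to show by induction on the length $l\ge4$ that every component vanishes. Then $\overline F'=0$, so $\overline F=-\{\overline K,\overline H\}$.

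For the inductive step, suppose $F'_j=\eta'_j=0$ for all $j<l$. At length $l$ the commutativity equation receives contributions only from the quadratic background densities acting on length-$l$ variations. Brackets with the cubic and longer background terms of $H$ and $J$ need variations of length at most $l-1$, and those vanish. By \eqref{eq:airy} the length-$l$ equation reduces to
\[
 -2p_3\eta'_l+\tfrac{12}{5}p_5F'_l=0 \quad\text{modulo } e_1.
\]
Specialness at length $l$ gives $l\,\eta'_l(\xi_1,\dots,\xi_{l-1},0)=F'_{l-1}$ modulo the sum of the remaining momenta, and $F'_{l-1}=0$ by induction. Hence $\eta'_l$ restricted to $\xi_l=0$ lies in the ideal generated by $e_1$, which should force $e_l\mid\eta'_l$ in the quotient ring $\C[e_2,\dots,e_l]$. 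In other words, $\eta'_l$ is derivative-only at this length.

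I would then argue that $p_3$ and $p_5$ are coprime in $\C[e_2,\dots,e_l]$ for $l\ge4$. This gives $F'_l=p_3R$ and $\eta'_l=\tfrac65p_5R$ for some symmetric $R$. The idea is then to absorb $R$ by a further canonical change with generator of length $l$, with symbol proportional to $R$, using the Airy formula $2p_3K_l$. That reduces to a pair where $F'_l=0$, and hence $\eta'_l=0$, so the induction continues. The main obstacle is making this absorption compatible with $F$ being derivative-only with indices at least two, i.e.\ showing $R$ is divisible by $e_l$. I would try to get this from $e_l^2\mid F'_l$ (up to corrections introduced by earlier changes) together with coprimality of $p_3$ and $e_l$ modulo $e_1$. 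I expect verifying the coprimality statements, and tracking that earlier generators keep $F'$ derivative-only, to be the delicate part.

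Once the induction finishes, each component of $\overline F'$ vanishes, and by weight finiteness only finitely many lengths occur. So $\overline F=\{\overline{K_{\rm tot}},\overline H\}$ for a derivative-only total generator $K_{\rm tot}$. But $F\in V_d$ has all indices at least two, while Lemma~\ref{lem:topv}(1) says the canonical density of $\{\overline{K_{\rm tot}},\overline H\}$ lies outside $V_d$ unless $K_{\rm tot}=0$. Hence $K_{\rm tot}=0$ and $F=0$, and Step~1 gives $G=0$.

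The dimension bounds then follow. In even $d$ the map $(F,G)\mapsto c$ is injective, so $\dim\mathscr K_d\le1$. In odd $d$ there is no quadratic part at all, so $\mathscr K_d=0$.
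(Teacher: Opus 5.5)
\textbf{Gap at length four.} Your induction breaks down at length four, and that is where the substance of the proposition lies. The claim that $p_3$ and $p_5$ are coprime for $l\ge4$ is false at $l=4$. With four momenta and $e_1=0$ one has $e_5=0$, so $p_3=3e_3$ and $p_5=-5e_2e_3$ share the factor $e_3$. The length-four equation is therefore $-6e_3(\eta_4'+2e_2F_4')=0$. It only yields $\eta_4'=-2e_2F_4'$ and imposes nothing on $F_4'$ beyond $e_4\mid F_4'$.

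Write $F_4'=e_4B(e_2,e_4)+e_3e_4C$. A length-four generator $K_4=-e_4C/6$ removes the $e_3$-part. The residual $e_4B(e_2,e_4)$, however, is not of the form $6e_3K_4$. The Airy bracket of a length-four generator only adds multiples of $e_3$ at that length, and the cubic part of $H$ acts on it only at length five. So the step ``$F_l'=p_3R$, then absorb $R$'' fails at $l=4$. Nothing at length four constrains $B$, and your induction cannot proceed.

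\textbf{Missing idea.} You need to kill $B$ with the next-length equation rather than with a canonical change. At length five, the cubic backgrounds $u^3/6$ and $-uu_1^2$, paired with the surviving length-four variations (using $\eta_4=-2e_2F_4$), give the source $\mathcal S_5(F_4)$ of \eqref{eq:L5}. Since $F_5'=e_5C_5$ is derivative-only, the length-five equation forces $\mathcal S_5(F_4)\equiv0\pmod{(e_3,e_5^2)}$. Extracting the coefficient of $e_5$ at $e_3=0$ (the paper differentiates along a curve through $(r,-r,s,-s,0)$) gives $\mathscr DB=0$, with $\mathscr D=4e_4\partial_{e_2}+2e_2e_4\partial_{e_4}+3e_2$. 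This operator is injective on $\C[e_2,e_4]$, because it multiplies the lowest $e_4$-power coefficient $b_0$ of $B$ by $(2b+3)e_2$. Hence $B=0$.

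\textbf{What carries over.} Only after this step does your scheme take over. For $l\ge5$, $e_3$ and $e_5-e_2e_3$ are coprime. Also $e_l\mid F_l'$ with $\gcd(e_l,e_3)=1$, so $K_l=-F_l'/(6e_3)$ is automatically derivative-only. The divisibility you flagged as the delicate part is therefore immediate; the real obstruction was the length-four residual. Your closing argument via Lemma~\ref{lem:topv} agrees with the paper's.
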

\begin{proof}
We apply the transformation of Lemma~\ref{lem:cubic-reduction}(1).
By parts~(2) and~(3), the transformed variations have no components
of length below four: in even weights this follows from the
hypothesis $c=0$, and in odd weights it holds automatically.
We suppress primes on the transformed symbols $F_l$ and $\eta_l$
throughout the following elimination.

For a momentum tuple $\boldsymbol a$, write $e_j(\boldsymbol a)$ and $p_j(\boldsymbol a)$ for its elementary symmetric polynomial and power sum. We display the tuple whenever different sets of momenta occur in the same formula. Bare $e_j,p_j$ refer to the single tuple specified in the surrounding calculation. Since both variations start in length four, only the quadratic background terms contribute at that length. In four variables $p_3=3e_3$ and $p_5=-5e_2e_3$, so \eqref{eq:airy} gives
\[
 -6e_3\eta_4-12e_2e_3F_4
 =-6e_3(\eta_4+2e_2F_4)=0.
\]
The polynomial ring $\C[e_2,e_3,e_4]$ has no zero divisors; hence
$\eta_4=-2e_2F_4$.

The first variation remains derivative-only under the transformations, so $e_4\mid F_4$. Separate the terms containing $e_3$ by writing
\[
 F_4=e_4B(e_2,e_4)+e_3e_4C(e_2,e_3,e_4).
\]
Choose a derivative-only density of length four with symbol
$K_4=-e_4C/6$. Its bracket with the quadratic part of $H$ changes $F_4$ by $2p_3K_4=6e_3K_4$, giving
\[
 F'_4=F_4+6e_3K_4=e_4B(e_2,e_4).
\]
The cubic part of $H$ contributes only in length five. The same transformation applied to the second variation preserves commutativity and specialness and introduces no terms of length below four. In particular, $\eta'_4=-2e_2F'_4.$
We again suppress primes, so from now on
\[
 F_4=e_4B(e_2,e_4),\qquad \eta_4=-2e_2F_4.
\]

At length five, the unknown symbols $F_5,\eta_5$ enter through the quadratic background terms. The other contributions are the brackets of the length-four variations with the cubic background densities $u^3/6$ and $-uu_1^2$. Write
\[
 \boldsymbol\xi=(\xi_1,\ldots,\xi_5),\qquad
 s_{ij}=\xi_i+\xi_j,\qquad
 \boldsymbol\zeta_{ij}=(\xi_{\widehat{ij}},s_{ij}),
\]
where $\xi_{\widehat{ij}}$ is the tuple obtained by deleting $\xi_i,\xi_j$. With this notation, we have e.g. $e_2(\boldsymbol\zeta_{ij})=e_2(\boldsymbol\xi)-\xi_i\xi_j$.
The quartic relation on a merged tuple reads
\[
 \eta_4(\boldsymbol\zeta_{ij})
 =-2e_2(\boldsymbol\zeta_{ij})F_4(\boldsymbol\zeta_{ij}).
\]
Substituting it into \eqref{eq:symbracket} gives the combined source symbol
\begin{equation}\label{eq:L5}
 \mathcal S_5(F_4)(\boldsymbol\xi)=\frac25\sum_{i<j}s_{ij}
 \bigl(e_2(\boldsymbol\xi)+\xi_i^2+\xi_j^2\bigr)
 F_4(\boldsymbol\zeta_{ij}).
\end{equation}

In the next equation and congruence, all bare $e_j,p_j$ are evaluated on the full five-tuple $\boldsymbol\xi$.
Using $p_5=5(e_5-e_2e_3)$, the full length-five equation is therefore
\[
 -6e_3\eta_5+12(e_5-e_2e_3)F_5+\mathcal S_5(F_4)=0.
\]
Since $F_5$ is derivative-only, write $F_5=e_5C_5$. The first two terms then belong to the ideal $(e_3,e_5^2)$, so
\[
 \mathcal S_5(F_4)\equiv0\pmod{(e_3,e_5^2)}.
\]
Equivalently, after setting $e_3=0$, both the constant and the linear coefficient in $e_5$ must vanish.

\noindent\textbf{\textit{Claim.}} The polynomial $B$ satisfies
\begin{equation}\label{eq:operator}
 \mathscr D B=0,\qquad
 \mathscr D=4e_4\partial_{e_2}
 +2e_2e_4\partial_{e_4}+3e_2,
\end{equation}
where $e_2,e_4$ denote the two formal arguments of $B$.

\begin{proof}
For this computation, use a curve of five-tuples $\boldsymbol\xi(t)$ with
\[
 \boldsymbol\xi(0)=(r,-r,s,-s,0),\qquad A=r^2,\quad B_0=s^2,
\]
and velocity
\[
 \dot\xi_5(0)=1,\qquad
 \dot\xi_1(0)=\dot\xi_2(0)=\frac{B_0}{2(A-B_0)},\qquad
 \dot\xi_3(0)=\dot\xi_4(0)=-\frac{A}{2(A-B_0)}.
\]
Set $e_j^0:=e_j(\boldsymbol\xi(0))$. These are fixed values, namely
\[
 e_2^0=-A-B_0,\qquad e_3^0=e_5^0=0,\qquad e_4^0=AB_0.
\]
Direct differentiation gives
\[
 \left.\frac{d}{dt}e_j(\boldsymbol\xi(t))\right|_{t=0}=0
 \quad(1\le j\le4),\qquad
 \left.\frac{d}{dt}e_5(\boldsymbol\xi(t))\right|_{t=0}=e_4^0.
\]
Thus differentiating a symmetric polynomial along this curve extracts $e_4^0$ times its coefficient of $e_5$ at $e_3=e_5=0$, evaluated at $(e_2,e_4)=(e_2^0,e_4^0)$.

We spell out this derivative for \eqref{eq:L5}. For the four pairs $(i,5)$, define the elementary symmetric polynomials on the moving merged tuples by
\[
 E_{2,i}(t):=e_2(\boldsymbol\zeta_{i5}(t)),\qquad
 E_{4,i}(t):=e_4(\boldsymbol\zeta_{i5}(t)),\qquad 1\le i\le4.
\]
Their values and derivatives at $t=0$ are
\[
 E_{2,i}(0)=e_2^0,\quad E_{4,i}(0)=e_4^0,\qquad
 \dot E_{2,i}(0)=-\xi_i(0),\quad
 \dot E_{4,i}(0)=\frac{e_4^0}{\xi_i(0)}.
\]
The equalities of values hold at the base point only; the merged and full elementary symmetric polynomials have different derivatives. Write
\[
 T_i(t)=(\xi_i(t)+\xi_5(t))
 \bigl(e_2(\boldsymbol\xi(t))+\xi_i(t)^2+\xi_5(t)^2\bigr).
\]
At $t=0$, direct calculation gives
\[
 \sum_{i=1}^4 T_i(0)\dot E_{2,i}(0)=4e_4^0,\qquad
 \sum_{i=1}^4 T_i(0)\dot E_{4,i}(0)=2e_2^0e_4^0,\qquad
 \sum_{i=1}^4\left.\frac{d}{dt}(T_iE_{4,i})\right|_{t=0}
 =3e_2^0e_4^0.
\]
Let us now study
\[ \left.\frac{d}{dt}\mathcal S_5(F_4)(\boldsymbol\xi(t))\right|_{t=0} = \left. \frac25 \frac{d}{dt} \sum_{i<j}s_{ij}(t)
 \bigl(e_2(\boldsymbol\xi(t))+\xi_i(t)^2+\xi_j(t)^2\bigr)
 F_4(\boldsymbol\zeta_{ij}(t))\right|_{t=0}.
\]
The four cross pairs joining one of $r,-r$ to one of $s,-s$ contribute zero to the derivative. Since both
$e_2(\boldsymbol\xi(0))+\xi_i(0)^2+\xi_j(0)^2$
and $e_4(\boldsymbol\zeta_{ij}(0))$ vanish, while all
remaining factors in the summand of \eqref{eq:L5} are regular
at $t=0$, the summand is $O(t^2)$ and hence has zero first
derivative at $t=0$.

The two opposite pairs $(1,2)$ and $(3,4)$ also contribute zero. For the first of these,
\[
 e_4(\boldsymbol\zeta_{12}(t))
 =\xi_3(t)\xi_4(t)\xi_5(t)
   \bigl(\xi_1(t)+\xi_2(t)\bigr)=O(t^2),
\]
since $\xi_5(0)=0$ and $\xi_1(0)+\xi_2(0)=r-r=0$. Thus this summand again has zero first derivative; the pair $(3,4)$ is treated identically.

Only the four pairs $(i,5)$ remain. Put $a=e_2^0$ and $b=e_4^0$ for the following calculation, and write $\partial_1B,\partial_2B$ for the derivatives of $B$ with respect to its first and second arguments. Since $E_{2,i}(0)=a$ and $E_{4,i}(0)=b$, the product and chain rules give
\[
\begin{aligned}
 &\left.\frac{d}{dt}
 \left[T_i(t)E_{4,i}(t)
 B\bigl(E_{2,i}(t),E_{4,i}(t)\bigr)\right]\right|_{t=0}\\
 &\quad=\left.\frac{d}{dt}(T_iE_{4,i})\right|_{t=0}B(a,b)
 +bT_i(0)\dot E_{2,i}(0)\,\partial_1B(a,b)\\
 &\qquad\quad
 +bT_i(0)\dot E_{4,i}(0)\,\partial_2B(a,b).
\end{aligned}
\]
Summing over $i=1,\ldots,4$ and using the three identities above yields
\[
\begin{aligned}
 \left.\frac{d}{dt}\mathcal S_5(F_4)(\boldsymbol\xi(t))\right|_{t=0}
 &=\frac25\left(3abB(a,b)
       +4b^2\partial_1B(a,b)+2ab^2\partial_2B(a,b)\right)\\
 &=\frac25b\,(\mathscr DB)(a,b).
\end{aligned}
\]
Thus the two derivative terms in $\mathscr D$ come from differentiating the two merged arguments of $B$, while the multiplication term $3e_2B$ comes from differentiating the prefactor $T_iE_{4,i}$.

It remains to use the congruence modulo $(e_3,e_5^2)$. It says that there are polynomials $U,V$ in the full five-tuple elementary symmetric coordinates such that
\[
 \mathcal S_5(F_4)(\boldsymbol\xi)
 =e_3(\boldsymbol\xi)U(\boldsymbol\xi)
  +e_5(\boldsymbol\xi)^2V(\boldsymbol\xi).
\]
Here $U(\boldsymbol\xi),V(\boldsymbol\xi)$ denote their evaluations on those coordinates. Recall that along the chosen curve,
\[
 e_3(\boldsymbol\xi(0))=0,\qquad
 \left.\frac{d}{dt}e_3(\boldsymbol\xi(t))\right|_{t=0}=0,
 \qquad e_5(\boldsymbol\xi(0))=0.
\]
Comparing with the computed derivative gives
\[
 b\,(\mathscr DB)(a,b)=0.
\]
For $r,s$ with $rs\ne0$ and $r^2\ne s^2$, we have $b\ne0$ and the chosen velocities are defined, so $(\mathscr DB)(a,b)=0$. These parameters yield all pairs $(a,b)$ with $b\ne0$ and $a^2-4b\ne0$: choose $r^2,s^2$ as the two roots of $z^2+az+b$. This is a dense open subset of $\C^2$. Since $\mathscr DB$ is a polynomial in its two arguments, it vanishes identically, proving \eqref{eq:operator}.
\end{proof}

The operator $\mathscr D$ is injective on $\C[e_2,e_4]$. Indeed, if $B\ne0$, write
\[
 B=e_4^b b_0(e_2)+\text{terms divisible by }e_4^{b+1},
 \qquad b_0\ne0,
\]
where $b\ge0$ is minimal. The term $4e_4\partial_{e_2}$ raises the exponent of $e_4$, while the other two terms preserve it. Consequently,
\[
 [e_4^b]\mathscr D B=(2b+3)e_2b_0(e_2)\ne0,
\]
contradicting \eqref{eq:operator}. Thus $B=0$, and both $F_4$ and $\eta_4=-2e_2F_4$ vanish. Both variations now start in length at least five.

Suppose the first remaining length is $l\ge5$. In this step, $e_j,p_j$ are evaluated on $\boldsymbol\xi=(\xi_1,\ldots,\xi_l)$. Since every lower component is zero, the leading pair equation has only the quadratic-background contributions:
\[
 -6e_3\eta_l+12(e_5-e_2e_3)F_l=0.
\]
Reducing modulo $e_3$ gives $e_5F_l\equiv0\pmod{e_3}$. In the polynomial ring $\C[e_2,\ldots,e_l]$, $e_3$ and $e_5$ are coprime, hence $e_3\mid F_l$. Since $F_l$ is derivative-only, we also have $e_l\mid F_l$; since $l\ge5$, $e_l$ and $e_3$ are coprime. Therefore
\[
 K_l=-\frac{F_l}{6e_3}
\]
is polynomial and divisible by $e_l$, and so represents a derivative-only density of length $l$. Apply its transformation to both variations. At the leading length,
\[
 F'_l=F_l+6e_3K_l=0,\qquad
 \eta'_l=\eta_l+\frac{12}{5}p_5K_l.
\]
The transformed pair equation then reads $-6e_3\eta'_l=0$, so $\eta'_l=0$. Brackets with the higher-length background terms contribute only above length $l$, and no previously eliminated component is reintroduced.

Repeat this step. Every generator has KdV weight $d-3$, because the Airy bracket raises weight by three. The variations keep weights $d$ and $d+2$, and every jet factor has weight at least two; hence their monomial lengths are bounded. The elimination therefore terminates after finitely many transformations.

Finally, return to the original pair. Each transformation adds brackets with the fixed background Hamiltonians $H,J$, so the successive generators add. If their sum is $K_{\mathrm{tot}}$, the final first variation satisfies
\[
 0=\overline F+\{\overline K_{\mathrm{tot}},\overline H\}.
\]
Choose the derivative-only canonical representative $K$ of $-\overline K_{\mathrm{tot}}$. Then the original first variation obeys
\[
 \overline F=\{\overline K,\overline H\}.
\]
Its canonical density $F\in V_d$ contains no $u_1$. Lemma~\ref{lem:topv} therefore forces $K=0$ and $F=0$. The original pair equation now reduces to $\{\overline H,\overline G\}=0$; since $G$ is derivative-only, the centralizer assertion of the same lemma gives $G=0$.

This proves injectivity of the quadratic coefficient. In even weight there is only one possible quadratic canonical monomial, while in odd weight there is none, giving the stated dimension conclusions.
\end{proof}

\subsection{The quartic obstruction}\label{sec:resonance}
We prove Step 4 by deriving a necessary equation for the quartic component and showing that its forcing has a nonzero obstruction.

\begin{proposition}\label{prop:quartic-vanishing}
For every integer $d\ge8$ with $d\equiv2\pmod4$, the pair kernel $\mathscr K_d$ is zero.
\end{proposition}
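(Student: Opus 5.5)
By Proposition~\ref{prop:quad}, it suffices to show that a pair $(F,G)\in\mathscr K_d$ with $d=2m+4$, $d\equiv2\pmod4$ (so $m$ odd), must have quadratic coefficient $c=0$. Suppose instead that $c\ne0$ and normalize $c=1$. We apply the transformation of Lemma~\ref{lem:cubic-reduction}(1). This gives $F_3'=0$, $\beta=-4(m+2)/(2m+3)$, and $\eta_3'$ given explicitly by \eqref{eq:Usimple}. We then push the elimination one length further, to length four, and extract a necessary condition on the quartic symbol $F_4$, in the same way the Claim in the proof of Proposition~\ref{prop:quad} produced \eqref{eq:operator}. The difference now is that the cubic component $\eta_3'$ and the quadratic components are nonzero. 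They feed inhomogeneous source terms into the length-four and length-five equations.

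The first step is the length-four equation. It reads
\[
 -6e_3(\eta_4+2e_2F_4)+\mathcal S_4=0,
\]
where $\mathcal S_4$ collects the brackets of the cubic background densities $u^3/6$ and $-uu_1^2$ with $\eta_3'$, and of the quartic part of $J$, namely $u^4/24$ and the length-four part of $-uu_1^2$ paired against $u_m^2$. All of these are explicit via \eqref{eq:symbracket}. Two things follow. First, $e_3\mid\mathcal S_4$, which should be a consistency check automatically satisfied. Second, $\eta_4=-2e_2F_4+\mathcal S_4/(6e_3)$. As before, we write $F_4=e_4B(e_2,e_4)+e_3e_4C$ and remove the $C$ part by a length-four generator. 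This leaves $B$ as the only unknown in the quartic data, coupled to a known particular source. Specialness $\partial_u\overline\eta'=\overline F'$ at length four, evaluated at $\xi_4=0$ modulo $e_1$, gives a further linear relation linking $B$ to the known cubic data. I would impose it along with the rest.

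The second step is to rerun the length-five argument. This is the congruence $\mathcal S_5\equiv0\pmod{(e_3,e_5^2)}$, differentiated along the curve through $(r,-r,s,-s,0)$. The outcome should be an inhomogeneous equation
\[
 \mathscr DB=R_m(e_2,e_4),
\]
where $R_m$ is an explicit polynomial built from $p_{2m}$, $p_{2m+3}/e_3$ and the length-four sources, all restricted to the two-parameter family. Since $\mathscr D$ is injective and lowers no $e_4$-order, we can solve order by order in $e_4$. Writing $B=\sum_b e_4^b b_b(e_2)$, we get the triangular recursion
\[
 (2b+3)e_2b_b=[e_4^b]R_m-4b_{b-1}'(e_2).
\]
Each step is solvable in polynomials only if the right side is divisible by $e_2$. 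Weight homogeneity forces $e_2^{i}e_4^{b}$ with $2i+4b$ fixed, so everything reduces to scalars. The obstruction is that at some stage, most likely the top $e_4$-power where $e_2$ must appear to power zero, the required coefficient has a nonzero constant term. For $m$ odd, the parity factors $(-1)^m$ together with the sign pattern of $p_{2m}$ on $(r,-r,s,-s,0)$ should then produce a nonzero residue, so no polynomial $B$ exists. For $m$ even, the same residue should cancel, consistently with the existence result for $d=4g$.

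The main obstacle is computing $R_m$ explicitly and showing that the resulting obstruction is nonzero for all odd $m$. This requires closed forms for the power sums $p_{2r}$ and $p_{2r+1}/e_3$ on the five-tuple curve and on its merged four-tuples. I would first compute $d=10$ and $d=14$ by hand or symbolically to identify the residue, and then prove the general formula using generating functions. The key tool is
\[
 \sum_r p_r t^r=\sum_i \frac{1}{1-\xi_i t},
\]
which evaluates in closed form at $(r,-r,s,-s,0)$ as rational functions of $A=r^2$ and $B_0=s^2$. The obstruction should then reduce to an explicit nonvanishing expression, for example a rational function of $m$ whose sign factor equals $1-(-1)^m$ up to a nonzero scalar. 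That gives $c=0$, and Proposition~\ref{prop:quad} then yields $\mathscr K_d=0$.
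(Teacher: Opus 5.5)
Your setup is the paper's up to the length-five equation. You assume $c=1$, apply Lemma~\ref{lem:cubic-reduction}, solve the length-four equation as $\eta_4=-2e_2F_4+(\text{source})$, remove the $e_3e_4$ part of $F_4$, and derive $\mathscr DB=\Omega_m$ from the tangent computation at $(r,-r,s,-s,0)$. Your triangular recursion also locates the obstruction correctly. Since $\deg B=2m-8$ and $\deg\Omega_m=2m-6$, every step below the top is solvable. For odd $m$ the top step has $e_2$-exponent zero and no unknown left, so $\operatorname{im}\mathscr D$ has codimension one in that degree and solvability reduces to one scalar condition on $\Omega_m$.

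\textbf{The gap.} The substance of the proposition is that this scalar is nonzero for \emph{every} odd $m\ge3$, and you leave it at ``should produce a nonzero residue.'' The mechanism you suggest for nonvanishing also cannot work:
\begin{itemize}
\item $(-1)^m$ is an overall sign of $U_3$ and of $J_m$, hence of $\Omega_m$, so it cannot decide vanishing.
\item At the base point $p_{2m}=2r^{2m}+2s^{2m}$, so there is no sign pattern to exploit.
\item Parity enters only through homogeneity: $2m-6\equiv0\pmod4$ if and only if $m$ is odd. For even $m$ nothing ``cancels''; the system is square with diagonal $(2b+3)e_2$ and is always solvable.
\end{itemize}

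\textbf{What closing the gap requires.} The paper supplies three ingredients.
\begin{itemize}
\item \emph{A closed form for the length-four source.} The merging identities give \eqref{eq:J}. Without it you cannot write $\Omega_m=((2m+1)R_m+C_m)/(2e_4)$. Unlike the homogeneous case, the four cross pairs now contribute, through $H_m=[e_4]J_m$, because the $J_m$ part of the source carries only the factor $s_{ij}$ and not $e_2+\xi_i^2+\xi_j^2$.
\item \emph{An explicit annihilator of $\operatorname{im}\mathscr D$.} Restrict to the curve $e_4=(e_2^2-1)/4$ and set $b(z)=B(z,(z^2-1)/4)$. Then $\sqrt{1-z^2}\,(\mathscr DB)(z,(z^2-1)/4)=-\frac{d}{dz}\bigl((1-z^2)^{3/2}b(z)\bigr)$, so the functional $\mathcal A$ of \eqref{eq:moment} kills the image. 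This replaces your iterated recursion.
\item \emph{An evaluation of $\mathcal A(\Omega_m)$ for all $m$ at once.} The paper sums $\kappa_nR_n$ and $\kappa_nH_n$ into rational generating functions via Newton's identities in logarithmic form, then integrates against $\sqrt{1-z^2}$. This yields \eqref{eq:obstruction}, a nonzero multiple of $\binom{2m}{m}4^{-m}$. Computing $d=10$ and $d=14$ would not replace this.
\end{itemize}

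\textbf{Two smaller corrections.}
\begin{itemize}
\item The only length-four sources are $\{u^3/6,\eta_3'\}$ and the pairing of $u_m^2$ with $u^4/24$. The density $-uu_1^2$ belongs to $J$, has no length-four part, and pairs with $F_3'=0$.
\item Specialness at length four gives no relation on $B$. Both $F_4=e_4B$ and $J_m$ are divisible by $e_4$, so $4\eta_4(\xi_1,\xi_2,\xi_3,0)=0=F_3'$ holds automatically.
\end{itemize}
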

\begin{proof}
Write $d=2m+4$, so $m\ge3$ is odd. Suppose that $\mathscr K_d$ contains a nonzero pair. By Proposition~\ref{prop:quad}, its quadratic coefficient is nonzero. Normalize that coefficient to $c=1$ and apply Lemma~\ref{lem:cubic-reduction}(1)--(2). The transformed pair has $F_3=0$ and cubic second-component symbol $U_3$ given by \eqref{eq:Usimple}, the $c=1$ case of the lemma. We suppress primes on the transformed symbols throughout this proof. We will derive \eqref{eq:forced} and then apply a linear functional that annihilates its left-hand side but is nonzero on its right-hand side.

\subsubsection{The quartic inhomogeneous term}
Here $\boldsymbol\xi=(\xi_1,\ldots,\xi_4)$ is a four-tuple with sum zero. For $\{k,l\}=\{1,2,3,4\}\setminus\{i,j\}$, put $s_{ij}=\xi_i+\xi_j$ and $\boldsymbol\zeta_{ij}=(\xi_k,\xi_l,s_{ij})$, a three-tuple. With $e_j=e_j(\boldsymbol\xi)$, the length-four component of the pair equation is
\[
 -6e_3\eta_4-12e_2e_3F_4
 -\frac14\sum_{i<j}s_{ij}U_3(\boldsymbol\zeta_{ij})
 -\frac{(-1)^m}{12}p_{2m+1}(\boldsymbol\xi)=0.
\]
The first two terms come from the quadratic backgrounds. By \eqref{eq:symbracket}, the sum comes from pairing $u^3/6$ with the cubic second variation, while the last term comes from pairing the quadratic first variation with $u^4/24$. The bracket of $F_3$ with $-uu_1^2$ vanishes because $F_3=0$. Solving for $\eta_4$ gives
\begin{equation}\label{eq:eta4}
 \eta_4=-2e_2F_4+J_m,
\end{equation}
where
\begin{equation}\label{eq:Jraw}
 J_m=\frac1{6e_3(\boldsymbol\xi)}\left(
 -\frac14\sum_{i<j}s_{ij}U_3(\boldsymbol\zeta_{ij})
 -\frac{(-1)^m}{12}p_{2m+1}(\boldsymbol\xi)\right).
\end{equation}
Every $e_j,p_j$ inside $U_3(\boldsymbol\zeta_{ij})$ is evaluated on that three-tuple, as in \eqref{eq:Usimple}. Let
\[
 \boldsymbol\rho=(\xi_1+\xi_2,\xi_1+\xi_3,\xi_1+\xi_4),\qquad
 S_n(\boldsymbol\xi):=p_n(\boldsymbol\rho).
\]
For even $n$, this is the symmetric polynomial obtained by taking one pair sum from each complementary pair of pairs. Then
\begin{equation}\label{eq:J}
 J_m=\frac{(-1)^{m+1}}{216(2m+3)}
 \left((2m+2)\frac{p_{2m+1}(\boldsymbol\xi)}{e_3(\boldsymbol\xi)}
       +\frac{p_{2m+2}(\boldsymbol\xi)-S_{2m+2}(\boldsymbol\xi)}{e_4(\boldsymbol\xi)}\right).
\end{equation}
This is a polynomial divisible by $e_4$, and it is even in $e_3$.

The following two merging identities prove \eqref{eq:J} directly.
\begin{align}
 \sum_{i<j}s_{ij}p_{2m}(\boldsymbol\zeta_{ij})
 &=-2p_{2m+1}(\boldsymbol\xi),\\
 \sum_{i<j}s_{ij}\frac{p_n(\boldsymbol\zeta_{ij})}{e_3(\boldsymbol\zeta_{ij})}
 &=\frac{e_3(\boldsymbol\xi)}{e_4(\boldsymbol\xi)}
 \bigl(p_{n-1}(\boldsymbol\xi)-S_{n-1}(\boldsymbol\xi)\bigr)
 -p_{n-2}(\boldsymbol\xi),\quad n\ge3\text{ odd}.
\end{align}
For the second identity, group the six summands into three complementary pairs. In one such pair write $a=\xi_i$, $b=\xi_j$, $c=\xi_k$, $d=\xi_l$ and $s=a+b=-(c+d)$. Since $n$ is odd,
\[
\begin{aligned}
 \frac{p_n(c,d,s)}{cd}+\frac{p_n(a,b,-s)}{ab}
 &=\frac{c^n+d^n}{cd}+\frac{a^n+b^n}{ab}
   +s^n\left(\frac1{cd}-\frac1{ab}\right)\\
 &=\frac{c^n+d^n}{cd}+\frac{a^n+b^n}{ab}
   -\frac{e_3(\boldsymbol\xi)}{e_4(\boldsymbol\xi)}s^{n-1},
\end{aligned}
\]
where we used
\[
 e_3(\boldsymbol\xi)
 =s_{ij}(\xi_k\xi_l-\xi_i\xi_j),\qquad
 \{k,l\}=\{1,2,3,4\}\setminus\{i,j\}.
\]
Summing the first two fractions over the three complementary pairs gives
\[
 \sum_{i\ne j}\frac{\xi_i^{n-1}}{\xi_j}
 =\frac{e_3(\boldsymbol\xi)}{e_4(\boldsymbol\xi)}
   p_{n-1}(\boldsymbol\xi)-p_{n-2}(\boldsymbol\xi).
\]
The remaining three powers of $s$ sum to $S_{n-1}(\boldsymbol\xi)$, proving the second merging identity. Substitution of both identities into \eqref{eq:Jraw}, using \eqref{eq:Usimple}, gives \eqref{eq:J}.

These formulas define an $e_4$-divisible polynomial $J_m$ for every integer $m\ge3$, independently of the existence of a kernel pair. Indeed, $U_3$ in \eqref{eq:Usimple} is polynomial, so the numerator of \eqref{eq:Jraw} is a symmetric polynomial in the four momenta. It is odd under simultaneous negation of the momenta; in $\C[e_2,e_3,e_4]$ this implies divisibility by $e_3$, and the quotient is even in $e_3$. To check divisibility by $e_4$, specialize to $(a,b,c,0)$ with $a+b+c=0$. The three pairs containing the zero momentum contribute $(a+b+c)U_3(a,b,c)=0$ to the sum in \eqref{eq:Jraw}. For the other three pairs, use
\[
 3U_3(q,-q,0)=(-1)^m q^{2m}
\]
to obtain $-(-1)^m p_{2m+1}(a,b,c)/3$. The two terms in the numerator of \eqref{eq:Jraw} therefore cancel. Hence $J_m$ vanishes on $e_4=0$ and is divisible by $e_4$.

\subsubsection{The forcing polynomial}
Apply the quartic canonical change used in Proposition~\ref{prop:quad}: write
$F_4=e_4B(e_2,e_4)+e_3e_4Q(e_2,e_3,e_4)$ and choose the derivative-only generator with symbol $K_4=-e_4Q/6$. It changes the two components by
\[
 \Delta F_4=6e_3K_4,\qquad
 \Delta\eta_4=-12e_2e_3K_4.
\]
Thus the transformed first component is $F_4=e_4B(e_2,e_4)$, while $\eta_4+2e_2F_4=J_m$ is unchanged. We again suppress primes. From now on, write $J_m(e_2,e_3,e_4)$ for its expression in elementary symmetric coordinates; these three arguments are scalar coordinate values, not a three-tuple of momenta. Set
\[
 R_m(e_2,e_4)=J_m(e_2,0,e_4),\qquad H_m(e_2,e_3)=[e_4]J_m(e_2,e_3,e_4).
\]
For auxiliary $r,s$ with $e_2=-(r^2+s^2)$ and $e_4=r^2s^2$, define
\begin{align}\label{eq:C}
 C_m(e_2,e_4)=2rs\bigl(& (r+s)^2H_m(e_2-rs,rs(r+s))\nonumber\\
                 &-(r-s)^2H_m(e_2+rs,-rs(r-s))\bigr).
\end{align}
The arguments of $H_m$ in \eqref{eq:C} are likewise elementary symmetric coordinate values. The expression is symmetric in $r^2,s^2$, so it is a polynomial in $e_2,e_4$. The length-five compatibility equation modulo $(e_3,e_5^2)$ is
\begin{equation}\label{eq:forced}
 \boxed{\mathscr D B=\Omega_m(e_2,e_4),\qquad
 \Omega_m=\frac{(2m+1)R_m+C_m}{2e_4}.}
\end{equation}
Note that $rs\mid C_m$ and $C_m$ is even in each of $r,s$, so $e_4=r^2s^2\mid C_m$. Also, $e_4\mid R_m$ because $e_4\mid J_m$. Thus the quotient defining $\Omega_m$ is polynomial.

We use ordinary momentum degree, with $\deg e_j=j$. By \eqref{eq:J}, $R_m$ has degree $2m-2$, while $H_m=[e_4]J_m$ has degree $2m-6$. The prefactors $rs(r\pm s)^2$ in \eqref{eq:C} restore four degrees, so $C_m$ also has degree $2m-2$. Dividing by $e_4$ gives $\deg\Omega_m=2m-6$. Moreover, $F_4$ has momentum degree $d-8=2m-4$, so $F_4=e_4B$ implies $\deg B=2m-8$. In particular, $B=0$ when $m=3$.

To prove \eqref{eq:forced}, return to five momenta $\boldsymbol\xi=(\xi_1,\ldots,\xi_5)$ and the merged four-tuples $\boldsymbol\zeta_{ij}=(\xi_{\widehat{ij}},s_{ij})$, where $s_{ij}=\xi_i+\xi_j$. The $J_m$ contribution to the length-five source is
\[
 -\frac15\sum_{i<j}s_{ij}
 J_m\bigl(e_2(\boldsymbol\zeta_{ij}),e_3(\boldsymbol\zeta_{ij}),
          e_4(\boldsymbol\zeta_{ij})\bigr).
\]
We repeat the tangent calculation in Proposition~\ref{prop:quad}, using the same curve $\boldsymbol\xi(t)$ through $(r,-r,s,-s,0)$ with the velocities specified there. Differentiating at $t=0$ extracts $e_4^0$ times the coefficient of the full-tuple coordinate $e_5$ after setting $e_3=0$, evaluated at $(e_2,e_4)=(e_2^0,e_4^0)$.

For the sum before multiplication by $-1/5$, the four pairs containing the zero momentum contribute
\[
 3R_m+2e_2\partial_{e_2}R_m+4e_4\partial_{e_4}R_m
 =(2m+1)R_m,
\]
evaluated at $(e_2^0,e_4^0)$. The first term differentiates the factors $s_{i5}$; the other two differentiate the merged coordinates, as in Proposition~\ref{prop:quad}. The equality is Euler's identity for the momentum degree $2m-2$ of $R_m$. There is no $e_3$-derivative contribution because $J_m$ is even in $e_3$. The four cross pairs contribute $C_m$ from \eqref{eq:C}: the merged $e_4$ vanishes at the base point, so only the coefficient $H_m=[e_4]J_m$ contributes to the derivative. The opposite pairs contribute zero, since their merged $e_4$ is $O(t^2)$ and $e_4\mid J_m$. Thus the derivative of the $J_m$ source is
\[
 -\frac15\bigl((2m+1)R_m+C_m\bigr)(e_2^0,e_4^0).
\]
The part of the source depending on $B$ has derivative $(2/5)e_4^0(\mathscr DB)(e_2^0,e_4^0)$, as computed in Proposition~\ref{prop:quad}. The full source belongs to $(e_3,e_5^2)$ in the full five-tuple coordinates, so its derivative must vanish. Consequently,
\[
 2e_4^0(\mathscr DB)(e_2^0,e_4^0)
 =\bigl((2m+1)R_m+C_m\bigr)(e_2^0,e_4^0).
\]
For generic $r,s$, we may divide by $2e_4^0$. Since the base-point coordinates range over a dense set and the quotient is polynomial, this proves \eqref{eq:forced} as a polynomial identity in $e_2,e_4$.

\subsubsection{The nonzero obstruction}\label{sec:obstruction}
We now show that the necessary equation \eqref{eq:forced} has no polynomial solution for odd $m$.
Define the linear functional $\mathcal A:\C[e_2,e_4]\to\C$ by
\begin{equation}\label{eq:moment}
 \mathcal A(T)=\int_{-1}^{1}\sqrt{1-z^2}\,
 T\left(z,\frac{z^2-1}{4}\right)\dd z.
\end{equation}

\noindent\textbf{\textit{Claim (Annihilation).}}
For every $B\in\C[e_2,e_4]$, we have $\mathcal A(\mathscr DB)=0$.
\begin{proof}[Proof of the claim]
Set $b(z)=B\bigl(z,(z^2-1)/4\bigr)$. By the chain rule,
\[
 b'(z)=\left[\frac{\partial B}{\partial e_2}
       +\frac z2\frac{\partial B}{\partial e_4}\right]_
       {(e_2,e_4)=(z,(z^2-1)/4)},
 \qquad
 (\mathscr DB)\left(z,\frac{z^2-1}{4}\right)
 =(z^2-1)b'(z)+3zb(z).
\]
Consequently,
\[
\begin{aligned}
 \mathcal A(\mathscr DB)
 &=\int_{-1}^{1}\sqrt{1-z^2}\,
       \bigl((z^2-1)b'(z)+3zb(z)\bigr)\dd z\\
 &=-\int_{-1}^{1}\frac{d}{dz}
       \bigl((1-z^2)^{3/2}b(z)\bigr)\dd z
 =-\left[(1-z^2)^{3/2}b(z)\right]_{-1}^{1}=0.
\end{aligned}
\]
\end{proof}

\noindent\textbf{\textit{Claim (Explicit obstruction).}}
For every odd $m\ge3$,
\begin{equation}\label{eq:obstruction}
 \boxed{\mathcal A(\Omega_m)
 =-\frac{\pi}{27}(m-1)(2m+1)
   \frac{\binom{2m}{m}}{4^m}<0.}
\end{equation}
\begin{proof}[Proof of the claim]
Introduce a generating variable $t$ and put $\theta=t\partial_t$. We use the explicit formulas \eqref{eq:J}, \eqref{eq:C}, and \eqref{eq:forced} as algebraic definitions for all integer indices $n\ge3$. The polynomiality argument above applies to every such index; this extension does not assume the existence of kernel pairs at those weights. Set
\[
 \kappa_n=216(2n+3)(-1)^{n+1},\qquad
 \begin{aligned}
 \mathcal R(e_2,e_4;t)&=\sum_{n\ge3}\kappa_nR_n(e_2,e_4)t^n,\\
 \mathcal H(e_2,e_3;t)&=\sum_{n\ge3}\kappa_nH_n(e_2,e_3)t^n.
 \end{aligned}
\]
The signs and denominators in \eqref{eq:J} are thereby removed.  Write
\[
 Q_0=1+e_2t+e_4t^2,\qquad
 S_0=1+2e_2t+(e_2^2-4e_4)t^2,
 \qquad a=1+e_2t,\quad Y=e_3^2t^3.
\]
We derive the two generating-function identities
\begin{align}\label{eq:forcing-newton}
 \mathcal R
 &=(4\theta^2+6\theta+2)\frac{t}{Q_0}
   +\frac1{e_4}\partial_t\log\frac{S_0}{Q_0^2},\nonumber\\
 \mathcal H
 &=(\theta+1)\frac{t^3(3Y-5a^2-8a-8)}{(a^2-Y)^2}.
\end{align}
Throughout this calculation, $e_j$ and $p_j$ refer to the four-tuple
$\boldsymbol\xi$ in \eqref{eq:J}, whereas
$S_j=p_j(\boldsymbol\rho)$ for
$\boldsymbol\rho=(\xi_1+\xi_2,\xi_1+\xi_3,\xi_1+\xi_4)$.
Multiplying \eqref{eq:J} by $\kappa_n$ gives
\[
 \kappa_nJ_n
 =(2n+2)\frac{p_{2n+1}}{e_3}
   +\frac{p_{2n+2}-S_{2n+2}}{e_4}.
\]
To sum the two terms, introduce
\[
 \mathcal P(t)=\sum_{n\ge1}\frac{p_{2n+1}}{(2n+1)e_3}t^{n-1},
 \qquad
 \mathcal L(t)=\sum_{k\ge1}\frac{p_{2k}-S_{2k}}{k}t^k.
\]
Since $\theta(t^n)=nt^n$, the odd-power contribution sums to
\[
 \sum_{n\ge1}(2n+2)\frac{p_{2n+1}}{e_3}t^n
 =(2\theta+2)(2\theta+1)\bigl(t\mathcal P(t)\bigr).
\]
Differentiating $\mathcal L$ and shifting $k=n+1$ gives
\[
 \sum_{n\ge1}(p_{2n+2}-S_{2n+2})t^n
 =\partial_t\mathcal L(t),
\]
because $p_2-S_2=0$. The right-hand side of the normalized formula
for $J_n$ vanishes for $n=1,2$: its two terms are respectively
$12,-12$ and $-30e_2,30e_2$. Thus the sum starting at $n=3$ is
\[
 \sum_{n\ge3}\kappa_nJ_nt^n
 =(2\theta+2)(2\theta+1)\bigl(t\mathcal P(t)\bigr)
   +\frac1{e_4}\partial_t\mathcal L(t).
\]

We now evaluate the odd-power series $\mathcal P$ and then the
even-power series $\mathcal L$. Newton's identities in logarithmic form read
\[
 \sum_{k\ge1}\frac{p_k}{k}v^k
 =-\log\prod_{i=1}^4(1-\xi_i v)
 =-\log(1+e_2v^2-e_3v^3+e_4v^4).
\]
Taking the odd part, dividing by $e_3v^3$, and writing $t=v^2$ yields
\[
 \mathcal P(t)
 =\sum_{j\ge0}\frac{Y^j}{(2j+1)Q_0^{2j+1}}.
\]
For the even powers, the product identities
\[
 \prod_{i=1}^4(1-\xi_i^2t)=Q_0^2-Y,
 \qquad
 \prod_{j=1}^3(1-\rho_j^2t)=S_0-Y
\]
likewise give
\[
 \mathcal L(t)=\log\frac{S_0-Y}{Q_0^2-Y}.
\]

To obtain $\mathcal R$, specialize the formula for
$\sum_{n\ge3}\kappa_nJ_nt^n$ at $e_3=0$. Since then $Y=0$, we have
\[
 \left.\mathcal P(t)\right|_{e_3=0}=\frac1{Q_0},
 \qquad
 \left.\mathcal L(t)\right|_{e_3=0}=\log\frac{S_0}{Q_0^2}.
\]
This proves the first identity in \eqref{eq:forcing-newton}, using
$(2\theta+2)(2\theta+1)=4\theta^2+6\theta+2$.

For the second identity, recall that $H_n=[e_4]J_n$ and put
$\Delta=a^2-Y$. As $Q_0=a+e_4t^2$, the series for $\mathcal P$ gives
\[
 [e_4]\mathcal P(t)
 =-t^2\sum_{j\ge0}\frac{Y^j}{a^{2j+2}}
 =-\frac{t^2}{\Delta}.
\]
The even-power contribution is divided by $e_4$, so we need
$[e_4^2]\mathcal L$. Expanding
\[
 \mathcal L(t)
 =\log(\Delta-4e_4t^2)
  -\log(\Delta+2ae_4t^2+e_4^2t^4)
\]
to second order in $e_4$ gives
\[
 [e_4^2]\mathcal L(t)
 =-\frac{8t^4}{\Delta^2}-\frac{t^4}{\Delta}
   +\frac{2a^2t^4}{\Delta^2}
 =\frac{t^4(a^2+Y-8)}{\Delta^2}.
\]
Consequently,
\begin{align*}
 \mathcal H
 & =-(2\theta+2)(2\theta+1)\frac{t^3}{\Delta}
    +\partial_t\frac{t^4(a^2+Y-8)}{\Delta^2}\\
 & =(\theta+1)\left[
    -2(2\theta+1)\frac{t^3}{\Delta}
    +\frac{t^3(a^2+Y-8)}{\Delta^2}\right],
\end{align*}
where we used $\partial_t(t^4G)=(\theta+1)(t^3G)$.
Finally, $\theta a=a-1$ and $\theta Y=3Y$, so
\[
 (2\theta+1)\frac{t^3}{\Delta}
 =\frac{t^3\bigl(7\Delta-2\theta\Delta\bigr)}{\Delta^2}
 =\frac{t^3(3a^2+4a-Y)}{\Delta^2}.
\]
Substitution gives the numerator $3Y-5a^2-8a-8$ in the second
identity of \eqref{eq:forcing-newton}.

Now restrict to the curve used in \eqref{eq:moment}, and put
\[
 Q(z,t)=1+zt+\frac{z^2-1}{4}t^2,\qquad
 S(z,t)=1+2zt+t^2,
\]
\[
 \mathcal O(z,t)=\sum_{n\ge3}\kappa_n
 \Omega_n\left(z,\frac{z^2-1}{4}\right)t^n.
\]
Define
\begin{equation}\label{eq:forcing-bases}
 f_1=\frac{2t}{(z^2-1)Q},\qquad
 f_2=-\frac{4t(t^2+3tz+6)}{(z^2-1)SQ},\qquad
 f_3=-\frac{4t^3(t^2+6tz+21)}{S^2Q}.
\end{equation}
Multiplying \eqref{eq:forced} by $\kappa_m t^m$, summing over
$m\ge3$, and restricting to
$(e_2,e_4)=(z,(z^2-1)/4)$ gives
\begin{equation}\label{eq:forcing-decomposition}
 \mathcal O
 =2(2\theta+1)^2(\theta+1)f_1
 +(2\theta+1)f_2+(\theta+1)f_3.
\end{equation}
Indeed, the contribution of $(2n+1)R_n/(2e_4)$ follows from the first identity in \eqref{eq:forcing-newton}, using
\[
 4\theta^2+6\theta+2=2(2\theta+1)(\theta+1),\qquad
 \left.\frac1{2e_4^2}\partial_t\log\frac{S_0}{Q_0^2}\right|_
 {(e_2,e_4)=(z,(z^2-1)/4)}=f_2.
\]
For the contribution of $C_n/(2e_4)$, put $q^2=(z^2-1)/4$. The two substitutions into $\mathcal H$ prescribed by \eqref{eq:C} have
\[
 a_\pm=1+(z\mp q)t,\qquad
 Y_\pm=q^2(-z\pm2q)t^3,
 \qquad
 a_\pm^2-Y_\pm=Q\bigl(1+(z\mp2q)t\bigr).
\]
Since the product of the last two linear factors is $S$, combining the two terms gives
\[
 \frac{t^3}{q}\left[
 \frac{(-z+2q)(3Y_+-5a_+^2-8a_+-8)}{(a_+^2-Y_+)^2}
 -\frac{(-z-2q)(3Y_--5a_-^2-8a_--8)}{(a_-^2-Y_-)^2}
 \right]=f_3.
\]
The factor $\theta+1$ in $\mathcal H$ then gives the last term of \eqref{eq:forcing-decomposition}.

It remains to integrate these three simpler rational functions. Set
\[
 I_j(t)=\int_{-1}^1\sqrt{1-z^2}\,f_j(z,t)\dd z,
 \qquad E(t)=(1-t)^{-1/2}-(1+t)^{-1/2}.
\]
For $t$ in a sufficiently small complex disk, $Q(z,t)$ and $S(z,t)$ are uniformly nonzero on $-1\le z\le1$. The weighted integrands, and their $t$-derivatives on smaller disks, are bounded by a constant multiple of $(1-z^2)^{-1/2}$. Thus these integrals are analytic near $t=0$, and differentiation and coefficient extraction may be performed under the integral.
Direct integration gives
\begin{align}\label{eq:forcing-base-integrals}
 I_1&=-2\pi E,\nonumber\\
 I_2&=\pi\left(-8\theta E+\frac{32t}{(1-t^2)^2}\right),\\
 I_3&=\pi\left(16(2\theta+1)E
              -\frac{16t(t^2+3)}{(1-t^2)^2}\right).\nonumber
\end{align}
Integrate \eqref{eq:forcing-decomposition} and substitute \eqref{eq:forcing-base-integrals}. The rational terms cancel because
\[
 (2\theta+1)\frac{32t}{(1-t^2)^2}
 -(\theta+1)\frac{16t(t^2+3)}{(1-t^2)^2}=0.
\]
Combining the remaining operators yields
\begin{equation}\label{eq:forcing-integrated}
 \sum_{n\ge3}\kappa_n\mathcal A(\Omega_n)t^n
 =-4\pi(2\theta+1)(2\theta+3)(\theta-1)E(t).
\end{equation}
Finally,
\[
 E(t)=2\sum_{\substack{n\ge1\\n\text{ odd}}}
       \frac{\binom{2n}{n}}{4^n}t^n.
\]
For odd $m\ge3$, extracting the coefficient of $t^m$ in \eqref{eq:forcing-integrated} gives
\[
 216(2m+3)\mathcal A(\Omega_m)
 =-8\pi(2m+1)(2m+3)(m-1)
       \frac{\binom{2m}{m}}{4^m}.
\]
Thus \eqref{eq:obstruction} follows; all remaining factors after the minus sign are positive.
\end{proof}

We can now finish the proof of the proposition. Applying $\mathcal A$ to~\eqref{eq:forced} yields a contradiction by the above claims: therefore no pair can have a nonzero quadratic coefficient, and Proposition~\ref{prop:quad} implies $\mathscr K_d=0$.
\end{proof}

\subsection{Diagonal injectivity}\label{sec:diagonal}
Only Step 5 remains. Proposition~\ref{prop:quad} gives $\dim\mathscr K_{4g}\le1$. We construct a pair using the polynomial KdV hierarchy and compute its nonzero quadratic and diagonal coefficients. The calculation gives the following more precise statement.

\begin{proposition}\label{prop:diagonal-algebraic}
For every $g\ge2$, the space $\mathscr K_{4g}$ contains a pair whose first component has coefficients
\begin{equation}\label{eq:algebraic-cA}
 c_g=[u_{2g-2}^2]F_g
 =\frac{(2g-2)6^{2g-2}}{(4g-3)!!},\qquad
 A_g=[u_2^g]F_g
 =\frac{(3g-2)(2g-1)}2\,
 \frac{24^g|B_{2g}|}{(2g)!}.
\end{equation}
In particular, both coefficients are nonzero. Every pair in $\mathscr K_{4g}$ satisfies
\begin{equation}\label{eq:diagonal-ratio}
 [u_2^g]F=\rho_g[u_{2g-2}^2]F,\qquad
 \rho_g=
 \frac{(3g-2)(4g-3)}{2g(g-1)6^{g-2}}
 \binom{4g-4}{2g-2}|B_{2g}|.
\end{equation}
Consequently the diagonal coefficient is injective on $\mathscr K_{4g}$.
\end{proposition}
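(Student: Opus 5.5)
The construction will come from the KdV hierarchy itself, which supplies a one-parameter family of special commuting deformations. Let $H^{\mathrm{KdV}}_p$ denote the canonical densities of the polynomial KdV hierarchy in our normalization, so that $H$ and $J$ in \eqref{eq:base} are its first two members. All of them commute, and they satisfy $\partial_u\overline H^{\mathrm{KdV}}_p=\overline H^{\mathrm{KdV}}_{p-1}$. The plan is to produce an infinitesimal deformation of the pair $(H,J)$ inside the space of special commuting pairs. The candidate is the variation along a symmetry of KdV that commutes with $\partial_u$ up to a controlled term. Concretely, I would take the flow generated by the $g$-th higher KdV Hamiltonian and differentiate with respect to a weight-adjusting rescaling.

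I would take the first approach below first and fall back on the second.

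\begin{itemize}
\item \textbf{Miura-type dressing.} Conjugate by a transformation $u\mapsto u+\tau\,\partial_x^2 P$, which automatically preserves commutativity and specialness. Its induced variation $(F,\eta)$ satisfies \eqref{eq:pair}, but it is trivial unless the transformation is non-Hamiltonian. So the natural choice is the noncanonical normal Miura generator that shifts $A_g$ in the DR/Hodge family.
\item \textbf{Odd dispersion.} Alternatively, use $\partial/\partial A_g$ of the Buryak--Rossi family \cite[Theorem 3.9]{BR} at the KdV point. Its first nonzero discrepancy lies in weight $4g$ and, by the uniqueness of the generalized standard form, is canonical.
\end{itemize}

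In either version, the derivative of the first Hamiltonian projected to $V_{4g}$ gives $F_g$, and the derivative of the second Hamiltonian gives $\eta=uF_g+G$.

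With existence settled, injectivity follows quickly. Proposition~\ref{prop:quad} gives $\dim\mathscr K_{4g}\le1$, so every pair is a multiple of the constructed one. The ratio $\rho_g$ in \eqref{eq:diagonal-ratio} is then just $A_g/c_g$ computed on this single pair, and the formula for $\rho_g$ follows by dividing the two expressions in \eqref{eq:algebraic-cA}, using $(4g-3)!!$ and $\binom{4g-4}{2g-2}$ to simplify. Consequently it suffices to verify \eqref{eq:algebraic-cA}, and specifically that both coefficients are nonzero.

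The two coefficients would be computed separately.
\begin{itemize}
\item \textbf{The quadratic coefficient $c_g$.} Restrict to the length-two component, where the linear Airy part dominates. The coefficient comes from the quadratic part of the relevant KdV density, which is proportional to $u_{2g-2}^2$ with a Gel'fand--Dikii coefficient. I would extract it from the resolvent recursion, and the double factorial $(4g-3)!!$ together with the powers of $6$ should emerge from that recursion.
\item \textbf{The diagonal coefficient $A_g$.} Use the specialization $u_j=0$ for $j\ne2$, or equivalently evaluate on $u$ quadratic in $x$ (constant $u_2$). There the KdV hierarchy reduces to an explicit generating function in $\eps^2u_2$. The Bernoulli numbers arise from $\tfrac{x/2}{\sinh(x/2)}$ or $\log\tfrac{\sinh x}{x}$, the standard genus expansion for constant curvature.
\end{itemize}

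The main obstacle is this diagonal evaluation. The reduction to constant $u_2$ must be compatible with canonical representatives: a nondiagonal monomial can contribute to $u_2^g$ after integration by parts. The rule $\lambda_1=\lambda_2$ has to be checked to prevent this. Moreover, the factor $(3g-2)(2g-1)/2$ must come out of the $\partial_u$ and commutator normalization.
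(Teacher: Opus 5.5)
You have a genuine gap. The statement asserts explicit values of $c_g$, $A_g$ and $\rho_g$, and your proposal supplies neither a concrete pair nor a mechanism that could produce those numbers.

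\textbf{The construction.} As described, the flow of a higher KdV Hamiltonian does nothing, because those Hamiltonians Poisson-commute with $\overline H$ and $\overline J$.

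Option (a) cannot work either. By the uniqueness in \cite[Theorem 3.8]{BR}, the generalized standard form, and with it every $A_g$, is an invariant of the normal Miura class, so no Miura generator ``shifts $A_g$''. Concretely, a normal Miura change that keeps the Poisson operator equal to $\partial_x$ is, to first order, a canonical change by a derivative-only $\overline K$. That is exactly the freedom standardization removes (Lemma~\ref{lem:topv}). Any other normal Miura change moves the Poisson operator off $\partial_x$, so its induced variation does not solve \eqref{eq:pair}, which is linearized with the bracket held fixed.

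Option (b) is sound, but only qualitatively. Differentiate the family of \cite[Theorem 3.9]{BR} in $A_g$, at fixed $a$, at the KdV point, and take the KdV-weight-$4g$ component. This gives an element of $\mathscr K_{4g}$ with nonzero diagonal coefficient; the linearized equations split by coefficient degree because $H,J$ have degree zero. Together with Proposition~\ref{prop:quad}, this yields nonvanishing of both coefficients and injectivity of the diagonal. That already gives what Theorem~\ref{thm:pair} uses, at the cost of importing the geometric existence theorem, so it is a genuinely different route from the paper's intrinsic one.

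However, option (b) gives no formula for the pair. So $c_g$, $A_g$ and $\rho_g$ stay out of reach without explicit Hodge--DR data. Your computational bullets refer to a ``relevant KdV density'' that neither option supplies.

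The paper instead writes the pair down explicitly from KdV densities and their fluxes. It sets $F_g^0=\sum_{i=-2}^{n+2}(-1)^ih_i^{\mathrm{KdV}}P_{n-i}-h_{n+3}^{\mathrm{KdV}}$, and the analogous $\eta_g^0$ with $S_i$, where $n=2g-5$. It verifies specialness and \eqref{eq:pair} from the flux identities, and shows $\Can(F_g^0)=(2g-1)\Can([T^{2g}]U(-T)U(T))$. The value of $c_g$ is then a length-two heat-kernel computation. It survives standardization because every generator has odd KdV weight $4g-3$ and hence no quadratic part.

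\textbf{The diagonal evaluation.} You name this obstacle but do not resolve it, and the repeated-highest-index rule does not resolve it. Setting $u_j=0$ for $j\ne2$ does not vanish on total derivatives: for example $\partial_x(u_1u_2^{g-1})=u_2^g+(g-1)u_1u_2^{g-2}u_3\mapsto u_2^g$. Standardization also moves $[u_2^g]$, because cancelling $u_1$-terms with $\{\overline K,\int u^3/6\dd x\}$ creates $u_2^g$ through the interior terms $\binom ja u_au_{j+1-a}$.

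Three ingredients are missing.
\begin{enumerate}
\item The length bound \eqref{eq:heat-length-bound}, from the connected-graph expansion of the Brownian-bridge formula. It guarantees that during standardization only brackets with $\int u^3/6\dd x$ can touch the length-$g$ part.
\item A linear functional on derivative-only polynomials of length $g$ and weight $2g$ that kills total derivatives and every $\{\overline K,\int u^3/6\dd x\}$, and equals $1$ on $u_2^g$. This is $\lambda_g$ in \eqref{eq:lambda-definition}, which equals $(3g-2)\int P\dd x$ along the profile $x=k-k^2/2$, $u=k^2/2$.
\item The evaluation of $\lambda_g$ on the length-$g$ part of $W$. That part is the one-loop amplitude $\mathcal B(T;x,x)^2=T/j(T)$, and integrating it along the profile gives $I(z)=(1-z\cot z+z^2)/2$.
\end{enumerate}

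Evaluating at a single constant-curvature point, as you propose, would instead produce the coefficients of $2z/\sin 2z$, the $k=0$ value of \eqref{eq:classical-J}. These are not $A_g$; already for $g=2$ they differ. Finally, the factors $2g-1$ and $3g-2$ come from the reduction to $W$ and from the normalization of $\lambda_g$, respectively, not from a $\partial_u$ normalization.
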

Here the Bernoulli numbers are defined by $t/(e^t-1)=\sum_{n\ge0}B_n t^n/n!$.

\begin{proof}[Proof of Proposition~\ref{prop:diagonal-algebraic}]
The proof has four parts: construct a polynomial pair, compute its quadratic coefficient, express its standard diagonal through a linear functional, and evaluate that functional.

\subsubsection{A local polynomial pair from KdV conservation laws}\label{sec:local-pair}
We construct densities $F_g^0,\eta_g^0$ satisfying the pair equation and specialness, and then standardize them to obtain $(F_g,G_g)\in\mathscr K_{4g}$. We also show that this standardization preserves the quadratic coefficient, which will be computed in the next subsection.

For a local functional $\overline A$, put
$Q_A=\partial_x(\delta\overline A/\delta u)$. To calculate density
representatives of Poisson brackets, write
\[
 \mathcal E_Q(f)=\sum_{j\ge0}(\partial_x^jQ)\frac{\partial f}{\partial u_j},
 \qquad
 \{\overline f,\overline A\}
 =\int\mathcal E_{Q_A}(f)\dd x.
\]
The second identity follows by integration by parts. In particular,
\[
 Q_H=uu_1+2u_3,\qquad
 Q_J=\tfrac12u^2u_1+4u_1u_2+2uu_3+\tfrac{12}{5}u_5.
\]
Denote by $h_i^{\mathrm{KdV}}$, $i\ge-2$, the background KdV
Hamiltonian densities specified by the normalized diagonal formal heat kernel
\begin{equation}\label{eq:heat-U}
 U(T,x)=\sqrt{48\pi T}\,
 e^{T(12\Dx^2+u)}(x,x)
 =\sum_{n\ge0}h_{n-2}^{\mathrm{KdV}}T^n.
\end{equation}
We use the heat-kernel description of the KdV densities \cite[Corollary 3.3]{Iliev}, interpreted as a local formal expansion. Relative to the operator $\partial_x^2+u$ in that reference, our normalization replaces its potential by $u/12$ and its heat time by $12T$; its $n$th diagonal coefficient is therefore multiplied by $12^n$. Equivalently, these densities are the differential polynomials determined by the normalized Lenard--Magri recursion \cite{Magri}
\begin{equation}\label{eq:heat-recursion}
 h_{-2}^{\mathrm{KdV}}=1,\qquad
 \Dx h_{i+1}^{\mathrm{KdV}}=
 \frac{(2u\Dx+u_1+6\Dx^3)h_i^{\mathrm{KdV}}}{2i+5},
 \qquad i\ge-2,
\end{equation}
with the homogeneous integration constants equal to zero. For example,
\[
 h_{-1}^{\mathrm{KdV}}=u,\quad
 h_0^{\mathrm{KdV}}=\frac{u^2}{2}+2u_2,\quad
 h_1^{\mathrm{KdV}}=\frac{u^3}{6}+2uu_2+u_1^2+\frac{12}{5}u_4.
\]
Two generating-series identities give the lowering relations we need. A constant shift $u\mapsto u+c$ multiplies $U(T,x)$ by $e^{cT}$, so $\partial_uU=TU$. For the variational identity, put $\mathcal L=12\partial_x^2+u$. Duhamel's formula and cyclicity of the trace give
\[
\begin{aligned}
 \delta\operatorname{Tr}e^{T\mathcal L}
 &=\int_0^T\operatorname{Tr}
   \bigl(e^{(T-s)\mathcal L}\,\delta u\,e^{s\mathcal L}\bigr)\dd s\\
 &=T\operatorname{Tr}\bigl(\delta u\,e^{T\mathcal L}\bigr).
\end{aligned}
\]
Here the trace is interpreted coefficientwise in the integrated local heat expansion. Multiplying by the normalization in \eqref{eq:heat-U} yields the trace-variation identity
\[
 \delta\int U(T,x)\dd x
 =T\int U(T,x)\,\delta u(x)\dd x,
 \qquad
 \frac{\delta}{\delta u}\int U(T,x)\dd x=TU(T,x).
\]
Comparing coefficients in these two identities gives
\[
 \partial_u h_i^{\mathrm{KdV}}=h_{i-1}^{\mathrm{KdV}},\qquad
 \frac{\delta}{\delta u}\int h_i^{\mathrm{KdV}}\dd x
 =h_{i-1}^{\mathrm{KdV}},\qquad i\ge-1.
\]
These densities represent the usual normalized KdV Hamiltonians, with
$\overline{h_1^{\mathrm{KdV}}}=\overline H$ and
$\overline{h_2^{\mathrm{KdV}}}=\overline J$.
The Lenard--Magri recursion \eqref{eq:heat-recursion} proves their Poisson commutativity:
\[
 \{\overline{h_i^{\mathrm{KdV}}},\overline{h_j^{\mathrm{KdV}}}\}=0.
\]
Notice that the representatives $h_i^{\mathrm{KdV}}$ need not be canonical; this choice will simplify their generating series.

In particular,
$\{\overline{h_i^{\mathrm{KdV}}},\overline H\}
=\{\overline{h_i^{\mathrm{KdV}}},\overline J\}=0$.
Thus the bracket densities
$\mathcal E_{Q_H}(h_i^{\mathrm{KdV}})$ and
$\mathcal E_{Q_J}(h_i^{\mathrm{KdV}})$ are total derivatives.
Choose homogeneous polynomial primitives $P_i,S_i$, called fluxes,
satisfying
\[
 \Dx P_i=\mathcal E_{Q_H}(h_i^{\mathrm{KdV}}),\qquad
 \Dx S_i=\mathcal E_{Q_J}(h_i^{\mathrm{KdV}}),\qquad
 P_{-2}=S_{-2}=0.
\]
Set $h_{-3}^{\mathrm{KdV}}=P_{-3}=S_{-3}=0$. Differentiating these identities with respect to $u$ gives
\begin{equation}\label{eq:flux-shifts}
 \partial_uP_i=P_{i-1}+h_i^{\mathrm{KdV}}-\delta_{i,-2},\qquad
 \partial_uS_i=S_{i-1}+P_i.
\end{equation}
Indeed, applying $\Dx$ to each proposed equality proves it because
$\partial_uQ_H=u_1$ and $\partial_uQ_J=Q_H$. Homogeneity removes the resulting constants, except at $i=-2$, where the displayed delta term is checked directly.

For $n=2g-5$, define the finite local sums
\begin{equation}\label{eq:local-pair-construction}
 F_g^0=\sum_{i=-2}^{n+2}(-1)^i h_i^{\mathrm{KdV}}P_{n-i}-h_{n+3}^{\mathrm{KdV}},\qquad
 \eta_g^0=\sum_{i=-2}^{n+2}(-1)^i h_i^{\mathrm{KdV}}S_{n-i}-h_{n+4}^{\mathrm{KdV}}.
\end{equation}
They have KdV weights $4g$ and $4g+2$. Since $n$ is odd, differentiating the sums and using \eqref{eq:flux-shifts} makes the adjacent terms cancel, leaving
\begin{equation}\label{eq:local-pair-specialness}
 \partial_uF_g^0=0,\qquad \partial_u\eta_g^0=F_g^0.
\end{equation}
The boundary contribution $h_{n+2}^{\mathrm{KdV}}$ in the first calculation comes from the delta term in \eqref{eq:flux-shifts} and cancels $\partial_uh_{n+3}^{\mathrm{KdV}}$.

These densities also satisfy the pair equation. To verify it, first use
$\{\overline H,\overline J\}=0$ to obtain the compatibility of the fluxes.
Variational differentiation of the Poisson bracket gives
\[
 \mathcal E_{Q_J}(Q_H)-\mathcal E_{Q_H}(Q_J)
 =\partial_x\frac{\delta\{\overline H,\overline J\}}{\delta u}=0.
\]
The chain rule and the defining equations for $P_i,S_i$ therefore give
\begin{align*}
 \partial_x\bigl(\mathcal E_{Q_J}(P_i)-\mathcal E_{Q_H}(S_i)\bigr)
 &=\mathcal E_{Q_J}\mathcal E_{Q_H}(h_i^{\mathrm{KdV}})
   -\mathcal E_{Q_H}\mathcal E_{Q_J}(h_i^{\mathrm{KdV}})\\
 &=\sum_{r\ge0}\frac{\partial h_i^{\mathrm{KdV}}}{\partial u_r}
   \partial_x^r\bigl(\mathcal E_{Q_J}(Q_H)
                    -\mathcal E_{Q_H}(Q_J)\bigr)=0.
\end{align*}
Homogeneity removes the possible constant, hence
$\mathcal E_{Q_J}(P_i)=\mathcal E_{Q_H}(S_i)$.
Using this identity and the product rule to calculate the following
Poisson brackets yields
\begin{align*}
 &\left\{\overline{h_i^{\mathrm{KdV}}P_j-P_ih_j^{\mathrm{KdV}}},
          \overline J\right\}
 -\left\{\overline{h_i^{\mathrm{KdV}}S_j-S_ih_j^{\mathrm{KdV}}},
          \overline H\right\}\\
 &\hspace{25mm}=\int\partial_x(S_iP_j-P_iS_j)\dd x=0.
\end{align*}
For $i+j=n$, oddness of $n$ gives
\[
 \sum_{i+j=n}(-1)^i h_i^{\mathrm{KdV}}P_j
 =\frac12\sum_{i+j=n}(-1)^i
   (h_i^{\mathrm{KdV}}P_j-P_ih_j^{\mathrm{KdV}}),
\]
where both indices are at least $-2$; the same identity holds with $S$
in place of $P$. Sum the preceding bracket identity with these signs.
The subtracted KdV Hamiltonians in \eqref{eq:local-pair-construction}
Poisson-commute with $\overline H,\overline J$, so
\[
 \{\overline H,\overline{\eta_g^0}\}
 +\{\overline{F_g^0},\overline J\}=0.
\]

Canonically reduce the pair. Mimicking the elimination procedure in the proof of \cite[Theorem 3.8]{BR}, we use Lemma~\ref{lem:topv} to standardize the infinitesimal pair in finitely many steps. To cancel a term $c\,u_1^sM$ at the largest $u_1$-exponent $s\ge1$, where $M$ contains only jets of index at least two and has length $\ell$, choose
\[
 K=\frac{c}{4g-2s-\ell-2}\,u_1^{s-1}M
\]
and apply the local canonical change
\[
 \overline F'=\overline F-\{\overline K,\overline H\},\qquad
 \overline\eta'=\overline\eta-\{\overline K,\overline J\}.
\]
The denominator is the positive coefficient in that lemma: the canonical
representative of $\{\overline K,\overline H\}$ has the term
$c\,u_1^sM$, and all its other terms have smaller $u_1$-exponent.
The Jacobi identity and $\{\overline H,\overline J\}=0$ preserve the pair
equation. Since $K$ is derivative-only, specialness is preserved as in
Subsection~\ref{sec:symbols}.
Repeating the change and taking canonical representatives terminates and
yields a pair $(F_g,G_g)\in\mathscr K_{4g}$, with $\eta_g=uF_g+G_g$.
Each generator $K$ used in these canonical changes has KdV weight
$4g-3$. Since every derivative-only quadratic canonical monomial
is of the form $u_j^2$, of even KdV weight $2j+4$, the canonical
representative of $K$ has no quadratic part. Standardization therefore leaves the quadratic
coefficient unchanged.

\subsubsection{The heat product and its quadratic coefficient}\label{sec:heat-product}
We express the canonical first component through the product $W(T)=U(-T)U(T)$, as in \eqref{eq:heat-first-component}, and compute its quadratic coefficient. Since standardization preserves this coefficient, the calculation gives $c_g$ in \eqref{eq:algebraic-cA} and proves that the constructed pair is nonzero.

Let $\mathcal P(T)=\sum_{i\ge-2}P_iT^{i+2}$. The flux generating series is
\begin{equation}\label{eq:heat-flux}
 \mathcal P(T)=2\partial_TU(T)-\frac{U(T)}T-uU(T)
 -4\Dx^2U(T)+\frac1T.
\end{equation}
To obtain this formula, take the variational derivative of
$\{\overline{h_{i+1}^{\mathrm{KdV}}},\overline H\}=0$, using the identity
$\delta\overline{h_{i+1}^{\mathrm{KdV}}}/\delta u=h_i^{\mathrm{KdV}}$
from the preceding subsection. Together with the defining equation
$\partial_xP_i=\mathcal E_{Q_H}(h_i^{\mathrm{KdV}})$ and the recursion
\eqref{eq:heat-recursion}, this gives, with the constant fixed by homogeneity,
\[
 P_i=(2i+5)h_{i+1}^{\mathrm{KdV}}
     -u h_i^{\mathrm{KdV}}-4\partial_x^2h_i^{\mathrm{KdV}}.
\]
Summing against $T^{i+2}$ for $i\ge-2$ gives \eqref{eq:heat-flux};
the term $1/T$ comes from $h_{-2}^{\mathrm{KdV}}=1$ when shifting the index.
By \eqref{eq:local-pair-construction}, we have
\[
F_g^0 = [T^{2g-1}]U(-T)  \mathcal P(T) - h^{\mathrm{KdV}}_{2g - 2}.
\]
Define
\[
 W(T)=U(-T)U(T).
\]
Since $\partial_uU(T)=TU(T)$, the product $W(T)$ is derivative-only as an actual formal density. The odd part of $2U(-T)\partial_TU(T)$ is $\partial_TW(T)$. The terms $uW(T)$ and $U(-T)\Dx^2U(T)$ are even modulo total derivatives. The contribution from $U(-T)/T$ cancels the last term $h_{n+3}^{\mathrm{KdV}}=h_{2g-2}^{\mathrm{KdV}}$ in \eqref{eq:local-pair-construction}. Hence
\begin{equation}\label{eq:heat-first-component}
 \Can(F_g^0)=(2g-1)\Can\bigl([T^{2g}]W(T)\bigr),
\end{equation}
where $\Can$ denotes the canonicalization.
Here is a direct calculation of its quadratic coefficient. Set
\[
 \phi(z)=\sum_{j\ge0}\frac{j!}{(2j+1)!}z^j.
\]
The linear heat term, i.e., the term with monomial length $1$ in $U(T)$, is $T\phi(12T\Dx^2)u$. The quadratic term in its integral is
$\frac{T^2}{2}\int u\phi(12T\Dx^2)u\dd x$; this follows from the trace-variation identity above. Thus the quadratic part of $\int W(T)\dd x$ is
\[
 \int u\,T^2\left(\frac{\phi(z)+\phi(-z)}2-\phi(z)\phi(-z)\right)u\dd x,
 \qquad z=12T\Dx^2.
\]
The coefficient recursion for $\phi$ gives
$4z\phi'(z)+(2-z)\phi(z)=2$. Consequently, if $\psi(z)=\phi(z)\phi(-z)$, then
\[
 z\psi'(z)+\psi(z)=\frac{\phi(z)+\phi(-z)}2.
\]
For even $m=2g-2$, the coefficient of $z^m$ in $\left(\frac{\phi(z)+\phi(-z)}2-\phi(z)\phi(-z)\right)$ is
$\frac{m}{m+1}\frac{m!}{(2m+1)!}$. Replacing in \eqref{eq:heat-first-component}, followed by integration by parts, gives
\[
 [u_{2g - 2}^2]F_g=\frac{(2g - 2)12^{2g - 2} (2g - 2)!}{(2(2g - 2)+1)!}
 =\frac{(2g - 2)6^{2g - 2}}{(4g-3)!!},
\]
which is the first formula in \eqref{eq:algebraic-cA}.

\subsubsection{Standard diagonal extraction}\label{sec:diagonal-extraction}
We construct a linear functional $\lambda_g$ that extracts $[u_2^g]F_g$ directly from the heat product, before standardization. We prove its invariance under canonicalization and standardization to obtain \eqref{eq:diagonal-functional}. First we establish the length bound
\begin{equation}\label{eq:heat-length-bound}
 [T^{2g}]W(T)\text{ has monomial length at most }g.
\end{equation}
To prove it, use the Brownian-bridge form of the Feynman--Kac formula
\cite[Theorem 1.1 and equation (1.1)]{SimonFK}, with the time and potential
in that reference rescaled as $t=24T$ and $V=-u/24$. On the diagonal,
its normalized formal expansion is
\[
 U(T,x)=\mathbb E\exp\left(
 T\int_0^1u(x+\sqrt{24T}\,B(s))\dd s\right),
\]
where $B$ is a centered Gaussian bridge with covariance
$C(s,t)=\mathbb E[B(s)B(t)]=\min(s,t)-st$.
Here the expectation is evaluated coefficientwise: Taylor-expand
\[
 u(x+\sqrt{24T}\,B(s))
 =\sum_{j\ge0}\frac{u_j(x)}{j!}(24T)^{j/2}B(s)^j,
\]
then expand the exponential and evaluate each Gaussian moment by
\emph{Wick's pairing rule} \cite[Theorem 3.1]{Etingof}. Odd moments vanish, and
\[
 \mathbb E[B(s_1)\cdots B(s_{2r})]
 =\sum_{\pi}\prod_{\{a,b\}\in\pi}C(s_a,s_b),
\]
where $\pi$ runs over all partitions of $\{1,\ldots,2r\}$ into pairs.
Each pair is called a \emph{contraction}; the time variables are then
integrated over $[0,1]$.

Represent each factor $u_j$ by a vertex and each contraction by an edge, allowing an edge to join a vertex to itself. A graph with $\ell$ vertices and $E$ edges has monomial length $\ell$ and contributes $T^{\ell+E}$: the exponential supplies $T^\ell$, and the $2E$ contracted Gaussian factors supply $T^E$.

Contributions factor over connected components, so the exponential
formula for graphs says that $\log U$ contains precisely the connected
graphs \cite[Theorem 3.11]{Etingof}. A connected graph on $\ell$ vertices has at least $\ell-1$
edges. Therefore a length-$\ell$ term in $\log U$ has \emph{heat degree}, i.e., exponent of the formal parameter $T$,
at least $2\ell-1$. Since
\[
 \log W(T)=\log U(T)+\log U(-T),
\]
all odd heat degrees cancel; its length-$\ell$ terms thus have heat
degree at least $2\ell$. Finally, in $W=\exp(\log W)$, both lengths
and heat degrees add under multiplication. Every length-$\ell$ term
of $W$ therefore still has heat degree at least $2\ell$, which gives
$\ell\le g$ at degree $T^{2g}$ and proves \eqref{eq:heat-length-bound}.
All these expansions are interpreted coefficientwise in differential
polynomials.

Fix $g\ge2$. On derivative-only monomials of length $g$ and differential weight $2g$, define
\begin{equation}\label{eq:lambda-definition}
 \lambda_g\left(\prod_{j\ge1}u_j^{m_j}\right)
 =C_r\prod_{j\ge2}\bigl((2j-3)!!\bigr)^{m_j},\qquad
 C_r=\frac{(-1)^r r!}{(3g-3)(3g-4)\cdots(3g-r-2)},
 \quad r=m_1.
\end{equation}
An empty product is one. The denominators do not vanish, since $r\le g-1$. Extend $\lambda_g$ linearly to derivative-only polynomials, setting it equal to zero on every component whose length is not $g$ or whose differential weight is not $2g$.

This functional vanishes on total derivatives. Indeed, for a source monomial of length $g$, differential weight $2g-1$, and $u_1$-exponent $r$, its value on the derivative, after removing the common double-factorial product, is
\[
 rC_{r-1}+(3g-2-r)C_r=0.
\]
Therefore, $\lambda_g$ is invariant under canonicalization, and $\lambda_g (\Can (F^0_g)) = \lambda_g (F^0_g)$.
It also vanishes on density representatives of the Poisson bracket
$\{\overline K,\int u^3/6\dd x\}$ for every derivative-only density $K$
of length $g-1$ and differential weight $2g-1$. Integration by parts gives
\[
 \left\{\overline K,\int\frac{u^3}{6}\dd x\right\}
 =\int\left(
 \sum_{j\ge1}\frac{\partial K}{\partial u_j}
       \sum_{a=1}^{j}\binom ja u_a u_{j+1-a}-u_1K
 \right)\dd x.
\]
For a monomial $K=\prod_j u_j^{m_j}$ with $r=m_1$, the terms with $a=1$ or $a=j$, together with $-u_1K$, produce a new factor $u_1$. Their coefficient is
\[
 r+\sum_{j\ge2}(j+1)m_j-1=3g-3-r.
\]
After applying $\lambda_g$ and removing the common factor $\prod_{j\ge2}((2j-3)!!)^{m_j}$, they contribute $(3g-3-r)C_{r+1}$. The interior terms $2\le a\le j-1$ preserve the $u_1$-exponent, and their double-factorial convolution is
\[
 \sum_{a=2}^{j-1}\binom ja(2a-3)!!(2j-2a-1)!!
 =(j-2)(2j-3)!!,\qquad j\ge2.
\]
For example, this follows by expanding
$A(z)A'(z)=zA'(z)-2A(z)$ with
$A(z)=1-z-\sqrt{1-2z}$. Since
$\sum_{j\ge2}(j-2)m_j=(2g-1)-2(g-1)+r=r+1$,
the interior terms contribute $(r+1)C_r$ after removal of the same common factor. The value of the bracket is therefore proportional to
\[
 (3g-3-r)C_{r+1}+(r+1)C_r=0.
\]

In the standardization of a density with length at most $g$, every chosen generator has length at most $g-1$: it is obtained by dividing a cancelled monomial by $u_1$. In the decomposition
\[
 \{\overline K,\overline H\}
 =\left\{\overline K,\int\frac{u^3}{6}\dd x\right\}
  -\left\{\overline K,\int u_1^2\dd x\right\},
\]
the second bracket has length at most $g-1$, so only the first can affect
length $g$. These changes also preserve the length bound, since the first bracket has length at most $g$ and the second has length at most $g-1$. Therefore, $\lambda_g$ is invariant under standardization and $\lambda_g(F_g)=\lambda_g(F_g^0)$.

In the standard density $F_g$, a monomial of length $g$ and KdV weight $4g$ has derivative indices at least two with sum $2g$. All of them must therefore equal two, so this component is $[u_2^g]F_g\cdot u_2^g$. Since $\lambda_g(u_2^g)=1$ and $\lambda_g$ vanishes on the other components, \eqref{eq:heat-first-component} and the preceding invariance identities give
\begin{equation}\label{eq:diagonal-functional}
 [u_2^g]F_g=(2g-1)\lambda_g\bigl([T^{2g}]W(T)\bigr).
\end{equation}

\subsubsection{Evaluation by the classical heat-kernel amplitude}\label{sec:heat-amplitude}
We evaluate $\lambda_g([T^{2g}]W(T))$ to obtain the diagonal coefficient $A_g$ in \eqref{eq:algebraic-cA}. First we express the length-$g$ terms by the classical heat-kernel amplitude. We then realize $\lambda_g$ as an integral on a chosen potential and evaluate that integral.

Introduce an auxiliary parameter
$\hbar$ and denote by $\mathcal K_\hbar(T;x,y)$ the heat kernel defined by
\[
 \hbar\partial_T\mathcal K_\hbar(T;x,y)
 =(12\hbar^2\partial_x^2+u(x))\mathcal K_\hbar(T;x,y),
 \qquad \mathcal K_\hbar(0;x,y)=\delta(x-y).
\]
Here $T$ is the heat-time variable, $y$ is the initial spatial variable,
and the differential operator acts on $x$. We use only the formal local
expansion of this kernel. Its normalized diagonal is
\[
 U_\hbar(T,x):=\sqrt{48\pi\hbar T}\,\mathcal K_\hbar(T;x,x)
 =U(\hbar T,x;u/\hbar^2),
\]
where the last expression means the series \eqref{eq:heat-U} with
$T$ replaced by $\hbar T$ and every jet $u_j$ replaced by
$u_j/\hbar^2$. Thus a term of heat degree $n$ and monomial length
$\ell$ acquires the factor $\hbar^{n-2\ell}$. By
\eqref{eq:heat-length-bound}, the product
$U_\hbar(-T,x)U_\hbar(T,x)$ has no negative powers of $\hbar$.
At heat degree $2g$, its constant term in $\hbar$ is exactly the
length-$g$ component of $[T^{2g}]W(T)$. The substitution $T\mapsto-T$
is made in the formal series.

To compute this constant term, use the formal heat-equation version of the semiclassical action--amplitude expansion; see \cite{DWM} for the classical action, Jacobi fields, and Van Vleck amplitude. In our normalization it takes the form
\[
 \mathcal K_\hbar(T;x,y)\sim(48\pi\hbar T)^{-1/2}
 e^{S(T;x,y)/\hbar}\mathcal B(T;x,y)(1+O(\hbar)).
\]
We call $S(T;x,y)$ the \emph{classical action} and $\mathcal B(T;x,y)$ the \emph{leading amplitude}; for $u=0$ their normalization is
$S(T;x,y)=-(x-y)^2/(48T)$ and $\mathcal B(T;x,y)=1$.
Substitution into the defining heat equation and comparison of the first
two orders in $\hbar$ give the Hamilton--Jacobi and transport equations
\begin{align*}
 \partial_TS&=12(\partial_xS)^2+u(x),\\
 \partial_T\mathcal B-24(\partial_xS)(\partial_x\mathcal B)
 &=\left(12\partial_x^2S+\frac1{2T}\right)\mathcal B.
\end{align*}
The term $1/(2T)$ in the second equation comes from differentiating
the prefactor $T^{-1/2}$.

To solve the transport equation, consider the trajectories
$\gamma(\tau;y,v)$ with fixed initial position $y$ and initial velocity $v$:
\[
 \ddot\gamma=-24u'(\gamma),\qquad
 \gamma(0;y,v)=y,\quad \dot\gamma(0;y,v)=v.
\]
For the trajectory with $\gamma(T;y,v)=x$, the action is the stationary value
\[
 S(T;x,y)=\int_0^T
 \left(u(\gamma(\tau))-\frac{\dot\gamma(\tau)^2}{48}\right)\dd\tau.
\]
Endpoint variation gives
$\partial_xS(\tau;\gamma(\tau),y)=-\dot\gamma(\tau)/24$.
Thus these trajectories are the characteristics of the transport equation.
Define their variation with respect to the initial velocity by
\[
 j(\tau)=\frac{\partial\gamma(\tau;y,v)}{\partial v},\qquad
 \ddot j=-24u''(\gamma)j,\qquad j(0)=0,\quad\dot j(0)=1.
\]
Differentiating the endpoint identity with respect to $v$, with $\tau,y$
fixed, gives
\[
 (\partial_x^2S)(\tau;\gamma(\tau),y)\,j(\tau)
 =-\frac{\dot j(\tau)}{24}.
\]
Consequently the transport equation along the trajectory becomes
\[
 \frac{d}{d\tau}\log\mathcal B(\tau;\gamma(\tau),y)
 =\frac1{2\tau}-\frac{\dot j(\tau)}{2j(\tau)}
 =\frac12\frac{d}{d\tau}\log\frac{\tau}{j(\tau)}.
\]
Integrating and using $j(\tau)\sim\tau$ and
$\mathcal B(\tau;\gamma(\tau),y)\to1$ as $\tau\to0$ fixes the
multiplicative constant. Taking $y=x$ and the initial velocity for which
$\gamma(T;x,v)=x$ therefore yields
\begin{equation}\label{eq:van-vleck-normalization}
 \mathcal B(T;x,x)^2=\frac{T}{j(T)}.
\end{equation}
Here the derivative defining $j$ keeps the initial position fixed;
nearby trajectories obtained by varying $v$ need not return to $x$.

To determine the parity in $T$, compare this expansion with the connected-graph expansion in Subsection~\ref{sec:diagonal-extraction}. On the diagonal,
\[
 \log U_\hbar(T,x)=\hbar^{-1}S(T;x,x)
                  +\log\mathcal B(T;x,x)+O(\hbar).
\]
A connected graph with $\ell$ vertices and $E$ edges contributes $T^{\ell+E}\hbar^{E-\ell}$. Thus the coefficient of $\hbar^{-1}$ consists of trees ($E=\ell-1$), and the coefficient of $\hbar^0$ consists of one-loop graphs ($E=\ell$). Comparing coefficients identifies them with $S$ and $\log\mathcal B$, respectively. This is the usual tree and one-loop expansion \cite[Section 3.6, Theorem 3.12]{Etingof}, applied coefficientwise to the Brownian-bridge contractions, with their time variables integrated.

The corresponding powers $T^{2\ell-1}$ and $T^{2\ell}$ show that the diagonal action is odd in $T$ and $\log\mathcal B$, hence $\mathcal B$, is even. The actions therefore cancel in the normalized product at opposite times, giving
\[
 [\hbar^0]\bigl(U_\hbar(-T,x)U_\hbar(T,x)\bigr)
 =\mathcal B(T;x,x)^2=\frac{T}{j(T)}.
\]
This is the generating series for the required length-$g$ components
at heat degree $2g$.

Consider the potential defined in terms of a parameter $k\le0$ by
\begin{equation}\label{eq:sheared-parabola}
 x=k-\frac{k^2}{2},\qquad u(x)=\frac{k^2}{2}.
\end{equation}
It has derivatives
\[
 u_1=\frac{k}{1-k},\qquad
 u_j=\frac{(2j-3)!!}{(1-k)^{2j-1}}\quad(j\ge2).
\]
For a derivative-only polynomial $P$ of length $g$ and differential weight $2g$, \eqref{eq:lambda-definition} therefore has the convergent integral representation
\begin{equation}\label{eq:lambda-beta}
 \lambda_g(P)=(3g-2)\int_{-\infty}^{0}(1-k)
 \left.P\right|_{\substack{u_1=k/(1-k)\\
 u_j=(2j-3)!!/(1-k)^{2j-1}\ (j\ge2)}}\dd k.
\end{equation}
For a monomial with $r$ copies of $u_1$, this is exactly the beta integral
\[
 (3g-2)\int_{-\infty}^{0}k^r(1-k)^{1-3g}\dd k=C_r.
\]

The classical trajectory in this potential can be integrated explicitly.
Since we evaluate the heat kernel on the diagonal, we consider a trajectory
satisfying $\gamma(0)=\gamma(T)=x$. Along it, write
\[
 \gamma(\tau)=v(\tau)-\frac{v(\tau)^2}{2},\qquad
 u(\gamma(\tau))=\frac{v(\tau)^2}{2},\qquad v(0)=v(T)=k.
\]
Here $v(\tau)$ varies along the trajectory, whereas $k$ is its fixed value at the common initial and final position. Let the turning point be $v=b$, and put $v=b\cos\varphi$. Conservation of energy gives
$\dot\gamma(\tau)^2=24\bigl(b^2-v(\tau)^2\bigr)$. For the symmetric trajectory from the endpoint $v=k$ back to itself, write $k=b\cos\theta$. Half of its travel time is
\[
 \frac T2=\frac1{\sqrt{24}}\int_0^\theta(1-b\cos\varphi)\dd\varphi.
\]
With $z=\sqrt{24}T/2$, this becomes
\begin{equation}\label{eq:classical-angle}
 z=\theta-k\tan\theta,\qquad
 \theta=\frac{z}{1-k}+O(z^3).
\end{equation}
To compute the Jacobi field in \eqref{eq:van-vleck-normalization}, set $E=b^2/2$ and differentiate half of the travel time at fixed endpoint $k$. It gives
\[
 \partial_E(T/2)=\frac{\cos\theta-b}{\sqrt{24}\,b^2\sin\theta},
 \qquad \dot\gamma(0)=\sqrt{24}\,b\sin\theta.
\]
To relate this derivative to $j(T)$, write $w=\dot\gamma(0)$ for the initial velocity and $R(w)$ for the return time at fixed initial position $x$. Differentiating $\gamma(R(w);x,w)=x$ gives
\[
 j(T)+\dot\gamma(T)R'(w)=0,\qquad T=R(w).
\]
Energy conservation gives $\dot\gamma(T)=-w$ and $E=u(x)+w^2/48$, so
\[
 j(T)=wR'(w)=\frac{w^2}{24}\frac{\partial T}{\partial E},
\]
where $T=T(E;k)$ is the return time as a function of the conserved energy $E$, and the derivative is taken with the endpoint parameter $k$ held fixed.
Substituting the initial velocity $w=\dot\gamma(0)=\sqrt{24}\,b\sin\theta$,
the displayed derivative of $T/2$, and $b=k/\cos\theta$ yields
\[
 j(T)=\frac{2}{\sqrt{24}}
       (\sin\theta\cos\theta-k\tan\theta).
\]
The formulas extend by their formal limits when $k=0$. Thus, for the potential \eqref{eq:sheared-parabola}, the generating
series in $T$ of the length-$g$ components of $[T^{2g}]W(T)$,
written in terms of $z=\sqrt{24}\,T/2$, is
\begin{equation}\label{eq:classical-J}
 \mathcal J(z,k)=\frac{z}{\sin\theta\cos\theta-k\tan\theta}.
\end{equation}
At $k=0$ this is $2z/\sin(2z)$, also obtained directly from the quadratic heat kernel.

It remains to evaluate the following integral for small $z>0$ and
extract its Taylor coefficients at $z=0$:
\[
 I(z)=\int_{-\infty}^0(1-k)(\mathcal J(z,k)-1)\dd k.
\]
For small positive $z$, \eqref{eq:classical-angle} gives
\[
 k=(\theta-z)\cot\theta,\qquad
 \frac{\dd k}{\dd\theta}
 =\frac{z-\theta+\sin\theta\cos\theta}{\sin^2\theta},
 \qquad \mathcal J\dd k=z\csc^2\theta\dd\theta.
\]
The endpoints become $\theta=0,z$. A primitive of the resulting integrand is
\[
 -\frac z2\cot\theta-
 \frac z2(z-\theta)\csc^2\theta-k+\frac{k^2}{2}.
\]
Its upper value is $-z\cot z/2$, and its lower limit is $-1/2-z^2/2$. Hence
\begin{equation}\label{eq:scalar-integral}
 I(z)=\frac{1-z\cot z+z^2}{2}.
\end{equation}
To recover $\lambda_g$ from $I(z)$, we use
\[
 [z^{2g}]I(z)=\int_{-\infty}^{0}(1-k)
 [z^{2g}]\mathcal J(z,k)\dd k,\qquad g\ge1.
\]
To justify this extraction, set $\alpha=(1-k)^{-1}$.
The implicit relation \eqref{eq:classical-angle} gives
$\mathcal J-1=\alpha^3z^2F(z,\alpha)$, where $F$ is analytic in $z$
uniformly for $\alpha\in[0,1]$. Since
$(1-k)\dd k=\alpha^{-3}\dd\alpha$, we have
$I(z)=z^2\int_0^1F(z,\alpha)\dd\alpha$, which may be expanded termwise.

Recall that $1-(t/2)\cot(t/2)=\sum_{g\ge1}|B_{2g}|t^{2g}/(2g)!$. For $g\ge2$, the term $z^2/2$ does not contribute to the coefficient
of $z^{2g}$ in \eqref{eq:scalar-integral}. Equations \eqref{eq:lambda-beta}--\eqref{eq:scalar-integral}, with $z=\sqrt{24}T/2$, give
\[
 \lambda_g\bigl([T^{2g}]W(T)\bigr)
 =\frac{3g-2}{2}\,\frac{24^g|B_{2g}|}{(2g)!}.
\]
Together with \eqref{eq:diagonal-functional}, this proves the second formula in \eqref{eq:algebraic-cA}.
Thus both coefficients in \eqref{eq:algebraic-cA} are nonzero. Proposition~\ref{prop:quad} now shows that the constructed pair spans $\mathscr K_{4g}$; dividing its two coefficients gives \eqref{eq:diagonal-ratio} and proves Proposition~\ref{prop:diagonal-algebraic}.
\end{proof}

\paragraph{Completion of the proof of Theorem~\ref{thm:pair}.}
Step 1 proves injectivity of the first-component projection. Proposition~\ref{prop:quad} proves vanishing in odd weights and the dimension bound in even weights (Steps 2 and 3). Proposition~\ref{prop:quartic-vanishing} excludes $d\equiv2\pmod4$ (Step 4). Finally, Proposition~\ref{prop:diagonal-algebraic} supplies a nonzero pair at every weight $d=4g$, so the dimension bound is attained; its nonzero diagonal coefficient proves the remaining assertion (Step 5). This proves Theorem~\ref{thm:pair}.

\section{Proof of Theorem~\ref{thm:generic}}\label{sec:nonlinear}
We prove Theorem~\ref{thm:generic}. Normalize both hierarchies to $a=-1$ by the common rescaling of Section~\ref{sec:normalization}. Write $H_p^{[r]}$ for the component of coefficient degree $r$, hence of KdV weight $2(p+2)+r$. Generalized standard form gives $H_1=H+F_{\mathrm{positive}}$, where every monomial of $F_{\mathrm{positive}}$ has KdV weight greater than six, or equivalently positive coefficient degree.

A variational derivative lowers KdV weight by two and $\partial_x$ raises it by one. Thus a bracket of first- and second-Hamiltonian components of coefficient degrees $r,s$ has KdV weight
\[
 (6+r)+(8+s)-3=11+r+s.
\]
We use this weight to extract homogeneous components of the commutativity equations below.

By specialness, the second Hamiltonian has canonical density
\[
 H_2=\frac{u^4}{24}-u u_1^2+uF_{\mathrm{positive}}+G_{\mathrm{derivative}},
\]
with $G_{\mathrm{derivative}}$ derivative-only. Each nonzero monomial in $G_{\mathrm{derivative}}$ has at least two factors (because the density is canonical), all of positive derivative index. Hence $L\ge2$ and $N\ge L$, so its coefficient degree satisfies
\[
 r=N+2L-8\ge3L-8\ge-2.
\]
The other displayed terms have degree zero or positive degree. Therefore $H_2$ has no components below degree $-2$.

Suppose $\nu<0$ is the lowest degree with $H_2^{[\nu]}\ne0$. This component is derivative-only. Extracting KdV weight $11+\nu$ from $\{\overline H_1,\overline H_2\}=0$ gives
\[
 0=\sum_{\substack{r\ge0,\ s\ge-2\\r+s=\nu}}
       \{\overline H_1^{[r]},\overline H_2^{[s]}\}
   =\{\overline H,\overline H_2^{[\nu]}\}.
\]
Indeed, any $r>0$ would require $s=\nu-r<\nu$, whose component vanishes by the choice of $\nu$. Lemma~\ref{lem:topv} now gives $H_2^{[\nu]}=0$, a contradiction. Thus $H_2$ has no negative-degree components. In degree zero, commutativity and specialness give
\[
 \{\overline H,\overline H_2^{[0]}\}=0,
 \qquad \partial_u\overline H_2^{[0]}=\overline H.
\]
The density $J$ in \eqref{eq:base} satisfies both equations. Their difference $H_2^{[0]}-J$ is derivative-only and commutes with $\overline H$, so Lemma~\ref{lem:topv} gives $H_2^{[0]}=J$. Both hierarchies therefore have the same degree-zero KdV pair $(H,J)$ and only nonnegative coefficient degrees.

Denote their differences by $\Delta\overline H_p=\overline{\widetilde H}_p-\overline H_p$. Subtracting specialness and commutativity gives the exact identities
\[
 \partial_u\Delta\overline H_2=\Delta\overline H_1,
\]
\[
 \{\overline H_1,\Delta\overline H_2\}
 +\{\Delta\overline H_1,\overline H_2\}
 +\{\Delta\overline H_1,\Delta\overline H_2\}=0.
\]
Here $\partial_u$ lowers KdV weight by two, so the specialness identity relates components of the same coefficient degree in $H_2$ and $H_1$.

Suppose the first Hamiltonians differ, and let $\mu>0$ be their first discrepancy degree. If $\nu<\mu$ were the first discrepancy degree of the second Hamiltonians, these identities would give
\[
 \partial_u\Delta\overline H_2^{[\nu]}
   =\Delta\overline H_1^{[\nu]}=0,
 \qquad
 \{\overline H,\Delta\overline H_2^{[\nu]}\}=0.
\]
For the second equation, extract KdV weight $11+\nu$: every term involving $\Delta\overline H_1$ has higher weight, and a positive-degree component of $H_1$ would require a component of $\Delta\overline H_2$ below $\nu$. The first equation makes $\Delta\overline H_2^{[\nu]}$ derivative-only; the second makes it a KdV centralizer. Lemma~\ref{lem:topv} therefore excludes such a $\nu$.

Consequently both differences have no components below coefficient degree $\mu$. Write their degree-$\mu$ canonical densities as $F$ and $\eta$. Specialness gives
\[
 \partial_u\overline\eta=\overline F,
 \qquad \eta=uF+G,
\]
where $G$ is derivative-only. Since $F\ne0$ by the choice of $\mu$, also $\eta\ne0$: the first discrepancies in the two Hamiltonians occur at the same coefficient degree, with respective KdV weights $\mu+6$ and $\mu+8$.

Every bracket in the subtracted commutativity equation contains a discrepancy of degree at least $\mu$ and another term of nonnegative degree. Thus all its components of KdV weight below $11+\mu$ vanish identically. At weight $11+\mu$, only a degree-zero background paired with a degree-$\mu$ discrepancy can contribute. A positive-degree background gives greater weight, and the bracket of two discrepancies has weight at least $11+2\mu>11+\mu$. Extracting precisely weight $11+\mu$ therefore gives
\[
 \{\overline H,\overline{uF+G}\}+\{\overline F,\overline J\}=0.
\]
Since $F\in V_{\mu+6}$ and $G\in W_{\mu+8}$, this is exactly the defining equation of the pair kernel:
\[
 (F,G)\in\mathscr K_d,\qquad d=\mu+6.
\]

We now apply Theorem~\ref{thm:pair} to this discrepancy pair. For $\mu\ge2$, we have $d\ge8$, as required by the theorem. If $d\not\equiv0\pmod4$, it gives $\mathscr K_d=0$. If $d=4g$, it gives a spanning pair $(F_g,G_g)$ with $[u_2^g]F_g\ne0$. Writing $(F,G)=c(F_g,G_g)$, equality of the diagonal data yields
\[
 0=[u_2^g]F=c\,[u_2^g]F_g,
\]
so $c=0$ and again $F=G=0$. The remaining case $\mu=1$, outside the range of Theorem~\ref{thm:pair}, follows from $V_7=0$ and Lemma~\ref{lem:topv}. Thus the discrepancy vanishes in every case, contradicting the choice of $\mu$.

The first Hamiltonians therefore agree. Liu--Zhang's uniqueness result \cite[Lemma 3.3 and Theorem 3.5]{LZ} identifies the Hamiltonian vector fields of the higher Hamiltonians: they commute with the same first field $uu_1+O(\eps)$ and have the same prescribed dispersionless parts. The difference of the corresponding Hamiltonians is therefore a Casimir of $\partial_x$, hence a multiple of $\int u\dd x$. The common dispersionless part and the positive differential weights of all higher corrections exclude this difference. This is the local-functional uniqueness statement recalled in \cite[Lemma 2.1(2)]{BR}. Undoing the common rescaling proves the theorem.

\end{document}